\newtheorem{theorem}{Theorem}
\newtheorem{definition}{Definition}
\newtheorem{lemma}{Lemma}
\newtheorem{remark}{Remark}
\newcommand{\Ali}{\textcolor{black}}
\newcommand{\TxAntenna}[3]{
	\coordinate (a) at (#1,#2);
	\draw[line width=0.25pt,scale=(#3)] (a)--($(a)+(0.2,0)$)--($(a)+(0.2,0.7)$)--
	($(a)+(0.1,0.8)$)--($(a)+(0.3,0.8)$)--($(a)+(0.2,0.7)$);
}
\newcommand{\RxAntenna}[4]{
	\coordinate (a) at (#1,#2);
	\draw[line width=0.25pt,scale=(#3)] (a)--($(a)+(-0.2,0)$)--($(a)+(-0.2,0.7)$)--
	($(a)+(-0.1,0.8)$)--($(a)+(-0.3,0.8)$)--($(a)+(-0.2,0.7)$);
}
\begin{document}

\IEEEoverridecommandlockouts
\title{Robust Transceiver Design for Full-Duplex Decode-and-Forward Relay-Assisted MIMO Systems}
\author{
\IEEEauthorblockN{Hossein Esmaeili*, Ali Kariminezhad*, and Aydin Sezgin}\\
\thanks{
*Equal contribution authors.\\
A. Kariminezhad H. Esmaeili and A. Sezgin are with the Institute of Digital Communication Systems, Ruhr-Universit\"at Bochum (RUB), Germany (emails: \{ali.kariminezhad, hossein.esmaeili aydin.sezgin\}@rub.de).
}}

\maketitle
\thispagestyle{empty}
\begin{abstract}
Robust transceiver design against unresolvable system uncertainties is of crucial importance for reliable communication. For instance, full-duplex communication suffers from such uncertainties when canceling the self-interference, since some residual self-interference (RSI) remains uncanceled due to imperfect channel knowledge. We consider a MIMO multi-hop system, where the source, the relay and the destination are equipped with multiple antennas. The considered decode-and-forward (DF) hybrid relay can operate in either half-duplex or full-duplex mode, and the mode changes adaptively depending on the RSI strength. We investigate a robust transceiver design problem, which maximizes the throughput rate of the worst-case RSI under the self-interference channel uncertainty bound constraint. The yielded problem turns out to be a non-convex optimization problem, where the non-convex objective is optimized over the cone of semidefinite matrices. Without loss of generality, we simplify the problem to the optimization over multiple scalar parameters using majorization theory. Furthermore, we propose an efficient algorithm to obtain a local optimal solution iteratively. Eventually, we obtain insights on the optimal antenna allocation at the relay input-frontend and output-frontend, for relay reception and transmission, respectively. Interestingly, given a number of antennas at the relay, the robustness improves if more antennas are allocated to reception than to transmission.
\end{abstract}
\section{Introduction}
 Reliability and throughput are two of the most crucial requirements for the next generation of wireless networks.  Optimally relaying  the  signal  from  a  source  to  a  destination  can  help enhance  reliability  and  capacity  of  networks  and  is  currently an  active  research  area~\cite{Kariminezhad2017}. Furthermore, relaying is the only communication means in disaster scenarios if the direct source-destination link is not available. Exploiting a relay for improving communication throughput rate raises several questions to be answered. For instance, how should the relay process the received signal before dispatching it to the destination? Now, relay can receive a signal from the source, process it and transmit it towards the destination in a successive manner. This type of relaying technique is known as half-duplex relaying. Alternatively, while receiving a signal at a certain time instant, a relay can simultaneously transmit the previously received signals. This technique is known as full-duplex relaying~\cite{Bliss2007}. Authors in~\cite{zhang2016full} discuss a range of potential FD techniques. In ~\cite{della2019distributed}, authors propose a scheme for users in Device-to-Device enabled in-band FD networks. 

As a consequence of transmitting and receiving at a common resource unit, the relay is confronted with self-interference (SI). Note that, full-duplex relaying potentially increases the total throughput rate of the communication compared to the half-duplex counterpart, only if the SI is handled properly at the relay input. By physically isolating the transmitter and receiver frontends of the relay, a significant portion of SI can be reduced~\cite{Sabharwal2014,Shankar2012}. Moreover, analog and/or digital signal processing at the relay input can be utilized to cancel a portion of SI~\cite{Shankar20122, Bliss2012, Eltawil2015,Vogt2018,Lee2014,Irigaray2018}. This can be realized if the estimate of the SI channel state information (CSI) can be obtained at the relay. \Ali{These SI cancellation procedures can effectively mitigate the destructive impact of SI up to a certain level. Hence, the remaining portion, so-called residual self-interference (RSI), is still observed at the relay input. The distribution of the RSI is investigated in~\cite{Irio2018,Irio2019}. The authors in~\cite{Alexandris2014} study the impact of RSI on practical setup. Moreover, the authors in ~\cite{Masmoudi2014} categorize the RSI sources in full-duplex in-band communication. This RSI is mainly due to the channel estimation uncertainties and also the transmitter noise. Therefore, the quality of channel estimation plays an important role for limiting RSI if the conventional modulation techniques are utilized. Interestingly, the authors in~\cite{Koohian2017} employ a superimposed signaling procedure (asymmetric modulation constellation) in the basic point-to-point FD communication for cancelling the SI and further retrieving the desired information contents without requiring channel estimates. The RSI degrades the performance of the communication quality evidently. The authors in~\cite{Chae2017} study the degrees-of-freedom (DoF) performance of FD cellular network in the presence of RSI. Furthermore, the authors in~\cite{Herath2013} with such degradation. Having the RSI, the authors in~\cite{Zlatanov2017} study the capacity of Gaussian two-hop FD relay.} 

By exploiting multiple antennas at the relay, the throughput rate from the source to destination can be improved~\cite{Fan2007,Mo2012}. Using multiple antennas at the relay provides the feasibility of SI cancellation spatially by beamforming techniques such that the impact of SI can be mitigated~\cite{Riihonen2011,Lioliou2010}. For instance, zero-forcing (ZF) beamforming forces the SI to zero at the relay input, however, it is not an optimal scheme in weak SI regimes if the relay is equipped with a limited number of antennas. Here, they show the optimality of ZF process at the relay with a very large antenna array. In contrast, ZF process at the relay is shown to be optimal when there are a massive number of antennas at the relay~\cite{Ngo2014,Kariminezhad2017SCC}.

Further, exploiting multiple antennas at the source and destination can provide the opportunity for improving the communication throughput rate. In a MIMO multi-hop system, the authors in~\cite{Suraweera2014} investigate a amplify-and-forward (AF) relay, where the precoder at the relay and the decoder at the destination are jointly optimized for maximizing the source-destination throughput rate. \Ali{Moreover, the authors in~\cite{Chen2016} study the power allocation problem in two-hop decode-and-forward (DF) MIMO FD relaying.} These works mainly assume a single stream transmission, which is not always optimal. The authors in~\cite{Jeong2017} consider a MIMO decode-and-forward (DF) relaying scheme with energy harvesting demands at the relay fulfilled by the source. \Ali{Assuming FD multi-pair communication with  multi-antenna transceivers, the authors in~\cite{Cirik2015} study the weighted sum-rate maximization, where they rendered the problem to the weighted mean squared error (MSE) minimization for obtaining low-complex algorithm.} These works mainly assume the availability of the SI channel for optimal MIMO pre- and post-processing tasks, where the RSI is simply treated as noise with estimated statistical moments. However, these estimates can not be guaranteed to be valid for all applications and scenarios. Hence, the study of a robust design becomes crucially important.

Robust transceiver design against the worst-case RSI channel helps find the threshold for switching between HD and FD operating modes in hybrid relay systems. The authors in~\cite{Taghizadeh2014} investigate a robust design for multi-user full-duplex relaying with multi-antenna DF relay. In that work, the sources and destinations are equipped with single antennas. Moreover, the authors in~\cite{Cirik2016} investigate a robust transceiver design for FD multi-user MIMO systems for maximizing the weighted sum-rate of the network.

\textit{Contribution:} We consider a DF multi-hub system with multiple antennas at the source, relay and destination. In this system, we allow multi-stream beamforming for throughput rate maximization. The optimization of maximum achievable rate of the DF full-duplex relaying is cast as a non-convex optimization problem. \Ali{The complexity of this problem is shown to be reduced analytically using majorization theory.} We propose an efficient algorithm to solve this problem in polynomial time. Finally, the transmit signal covariances at the source and the relay are designed efficiently to improve robustness against worst-case RSI channel in a given uncertainty bound.

\section{System Model}
We consider the communication from a source equipped with $M$ antennas to a destination with $N$ antennas. The reliable communication is assumed to be only feasible by means of a relay with $K_\mathrm{t}$ transmitter and $K_{r}$ receiver antennas at the output and input frontends, respectively. The received signals at the relay and destination are given by
\begin{align}
\mathbf{y}_{r}&= \mathbf{H}_1\mathbf{x}_{s}+\kappa \mathbf{H}_{r}\mathbf{x}_{r}+\mathbf{n}_{r},\\
\mathbf{y}_\mathrm{d}&= \mathbf{H}_2\mathbf{x}_{r}+\mathbf{n}_\mathrm{d},
\end{align}
respectively, where $\kappa\in\{0,1\}$. Notice that, $\kappa=0$ coincides with HD relaying and $\kappa=1$ denotes FD relaying. The transmit signal of the source is denoted by $\mathbf{x}_{s}\in\mathbb{C}^{M}$ with the covariance matrix $\mathbf{Q}_{s}=\mathbb{E}[\mathbf{x}_{s}\mathbf{x}^H_{s}]$, and the transmit signal of the relay is represented by $\mathbf{x}_{r}\in\mathbb{C}^{K_{\mathrm{t}}}$, with the covariance matrix $\mathbf{Q}_{r}=\mathbb{E}[\mathbf{x}_{r}\mathbf{x}^H_{r}]$. The additive noise vectors at the relay and destination are denoted by $\mathbf{n}_{r}\in\mathbb{C}^{K_{r}}$ and $\mathbf{n}_\mathrm{d}\in\mathbb{C}^N$, respectively, which are assumed to follow zero-mean Gaussian distributions with identity covariance matrices. The source-relay channel is represented by $\mathbf{H}_1\in\mathbb{C}^{K_\mathrm{t}\times M}$ and the relay-destination channel is denoted by $\mathbf{H}_2\in\mathbb{C}^{N\times K_{r}}$, see~\figurename{ \ref{fig:SystemModel}}. These channels are assumed to be perfectly known. Furthermore, the self-interference (SI) channel at the relay is represented by $\mathbf{H}_{r}$, which is assumed to be known only imperfectly. In what follows, we present the achievable throughput rates for the HD and FD relaying. In the next section, we start with the HD relay, in which $\kappa=0$.
\begin{figure}
\centering
\tikzset{every picture/.style={scale=.95}, every node/.style={scale=0.7}}%
\begin{tikzpicture}
\draw (0,0) rectangle (1,2.5);
\TxAntenna{1}{1.8}{0.8};
\draw (1.2,1.6) circle (0.01cm);
\draw (1.2,1.4) circle (0.01cm);
\draw (1.2,1.2) circle (0.01cm);
\TxAntenna{1}{0.2}{0.8};
\node at (1.3,1.8){1};
\node at (1.3,0.2){$M$};

\node[rotate=90] at (0.5,1.25){Source};

\draw[fill=green,opacity=0.5] (3,0) rectangle (5,2.5);

\RxAntenna{3}{1.8}{0.8};
\node at (2.7,1.8){1};
\node at (2.7,0.2){$K_r$};
\draw (2.8,1.6) circle (0.01cm);
\draw (2.8,1.4) circle (0.01cm);
\draw (2.8,1.2) circle (0.01cm);
\RxAntenna{3}{0.2}{0.8};

\TxAntenna{5}{1.8}{0.8};
\node at (5.3,1.8){1};
\node at (5.3,0.2){$K_t$};
\draw (5.2,1.6) circle (0.01cm);
\draw (5.2,1.4) circle (0.01cm);
\draw (5.2,1.2) circle (0.01cm);
\TxAntenna{5}{0.2}{0.8};

\node at (4,1.25){Relay};

\draw (7,0) rectangle (8,2.5);
\RxAntenna{7}{1.8}{0.8};
\node at (6.7,1.8){1};
\node at (6.7,0.2){$N$};
\draw (6.8,1.6) circle (0.01cm);
\draw (6.8,1.4) circle (0.01cm);
\draw (6.8,1.2) circle (0.01cm);
\RxAntenna{7}{0.2}{0.8};

\node[rotate=90] at (7.5,1.25){Destination};

\draw[fill=yellow,opacity=0.3] (2,1.25) ellipse (0.5cm and 1.25cm);
\node at (2,1.25){$\mathbf{H}_1$};
\draw[fill=yellow,opacity=0.3] (6,1.25) ellipse (0.5cm and 1.25cm);
\node at (6,1.25){$\mathbf{H}_2$};

\draw[thick,->] (5.2,2.7) edge[out=90, in=90] (2.8,2.7);
\node at (4,3.7){$\mathbf{H}_\mathrm{r}$};

\end{tikzpicture}
\vspace*{0.4cm}
\caption{System model of a full-duplex relay}
\label{fig:SystemModel}
\end{figure}
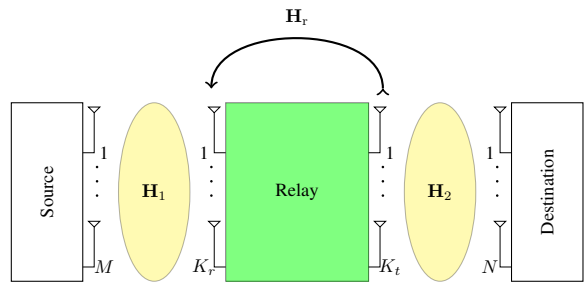

\section{Achievable Rate (Half-Duplex Relay)}
Suppose that the relay employs DF strategy. We consider a simple half-duplex relay, where the source and the relay transmit in two subsequent time instances. Using time sharing, the achievable rate between the source and destination nodes is given by
\begin{align}
R^{\mathrm{HD}}=\min(\alpha R^{\mathrm{HD}}_\mathrm{sr},(1-\alpha)R^{\mathrm{HD}}_\mathrm{rd}),
\end{align}
in which $R^{\mathrm{HD}}_\mathrm{sr}$ and $R^{\mathrm{HD}}_\mathrm{rd}$ are the achievable rates on the source-relay and relay-destination links, respectively, and $\alpha$ is the time-sharing parameter. Note that, in half-duplex relaying the source and relay transmissions are conducted in separate channel uses. Thus, these rates are given by
\begin{align}
R^{\mathrm{HD}}_\mathrm{sr}&=\log\big|\mathbf{I}_{K_{r}}+\mathbf{H}_1\mathbf{Q}_{s}\mathbf{H}^H_1\big|,\\
R^{\mathrm{HD}}_\mathrm{rd}&=\log\big|\mathbf{I}_{N}+\mathbf{H}_2\mathbf{Q}_{r}\mathbf{H}^H_2\big|.
\end{align}
Now, the transmit covariance matrices $\mathbf{Q}_{s}\in\mathbb{H}^{M\times M}$ and $\mathbf{Q}_{r}\in\mathbb{H}^{K_\mathrm{t}\times K_\mathrm{t}}$ are optimized by maximizing the achievable rate from the source to the destination. Here, the convex cone of Hermitian positive semidefinite matrices of dimensions $M\times M$ and $K_\mathrm{t}\times K_\mathrm{t}$ are represented by $\mathbb{H}^{M\times M}$ and $\mathbb{H}^{K_\mathrm{t}\times K_\mathrm{t}}$, respectively. Importantly, for maximizing this achievable rate, the time-sharing parameter, i.e., $\alpha$ needs to be optimized alongside the system parameters, e.g., power allocation. Readily, optimal $\alpha$ occurs at $\alpha R^{\mathrm{HD}}_\mathrm{sr}=(1-\alpha)R^{\mathrm{HD}}_\mathrm{rd}$. Therefore, the achievable rate becomes as follows,
\begin{align}
R^{\mathrm{HD}}=\frac{R^{\mathrm{HD}}_\mathrm{sr}R^{\mathrm{HD}}_\mathrm{rd}}{R^{\mathrm{HD}}_\mathrm{sr}+R^{\mathrm{HD}}_\mathrm{rd}}.
\end{align}
The throughput rate maximization problem is cast as
\begin{subequations}\label{P:HDa}
\begin{align}
\max_{\mathbf{Q}_{s},\mathbf{Q}_{r}}\quad & \frac{R^{\mathrm{HD}}_\mathrm{sr}R^{\mathrm{HD}}_\mathrm{rd}}{R^{\mathrm{HD}}_\mathrm{sr}+R^{\mathrm{HD}}_\mathrm{rd}}\\
\text{subject to}\quad\quad & \mathrm{Tr}(\mathbf{Q}_{s})\leq P_{s},\label{P:HDa:ConsA}\\ 
&\mathrm{Tr}(\mathbf{Q}_{r})\leq P_{r},\label{P:HDa:ConsB}
\end{align}
\end{subequations}
in which the constraints~\eqref{P:HDa:ConsA} and~\eqref{P:HDa:ConsB} represent the transmit power constraints and $P_{s}$ and $P_{r}$ are the transmit power budgets at the source and relay, respectively. Let $\mathbf{Q}_{s}=\mathbf{U}_{s}\boldsymbol\Gamma_{s}\mathbf{U}^H_{s}$ and $\mathbf{Q}_{r}=\mathbf{U}_{r}\boldsymbol\Gamma_{r}\mathbf{U}^H_{r}$. Since, $R^{\mathrm{HD}}_\mathrm{sr}$ and $R^{\mathrm{HD}}_\mathrm{rd}$ are concave functions of $\mathbf{Q}_{s}$ and $\mathbf{Q}_{r}$, the solutions are given as~\cite{Telatar99}
\begin{align}
\mathbf{Q}^{\star}_{s}={\bf U}^{\star}_{s}{\bf\Gamma}^{\star}_{s}{\bf U}^{{\star}^H}_{s},\ \text{with}\ {\bf U}^\star_{s}=\mathbf{R}_1,\label{eq:QsA}\\
\mathbf{Q}^{\star}_{r}={\bf U}^{\star}_{r}{\bf\Gamma}^{\star}_{r}{\bf U}^{{\star}^H}_{r},\ \text{with}\ {\bf U}^\star_{r}=\mathbf{R}_2.\label{eq:QrA}
\end{align}
Notice that ${\bf R}_1$ and ${\bf R}_2$ correspond to the right singular matrices of ${\bf H}_1$ and ${\bf H}_2$, respectively, with
$
{\bf H}_1={\bf L}_1{\bf \Sigma}_1{\bf R}^H_1,$ and $
{\bf H}_2={\bf L}_2{\bf \Sigma}_2{\bf R}^H_2.
$
The diagonal matrices ${\bf\Gamma}^{\star}_{s}$ and ${\bf\Gamma}^{\star}_{r}$ are determined by the water-filling algorithm~\cite{Telatar99} as
\begin{align}
{\bf\Gamma}^{\star}_{s}&=\left(\tau_{s}\mathbf{I}-({\bf\Sigma}^{H}_1{\bf\Sigma}_1)^{-1} \right)^{+},\\
{\bf\Gamma}^{\star}_{r}&=\left(\tau_{r}\mathbf{I}-({\bf\Sigma}^{H}_2{\bf\Sigma}_2)^{-1} \right)^{+},
\end{align}
respectively. The water levels $\tau_{s}$ and $\tau_{r}$ are chosen such that they satisfy the power constraint, i.e., $\mathrm{Tr}\left(\tau_{s}\mathbf{I}-({\bf\Sigma}_1{\bf\Sigma}^{H}_1)^{-1} \right)=P_{s}$, and $\mathrm{Tr}\left(\tau_{r}\mathbf{I}-({\bf\Sigma}_2{\bf\Sigma}^{H}_2)^{-1} \right)=P_{r}$. Next, we determine the maximum achievable rate for the full-duplex relay.  

\section{Achievable Rate (Full-Duplex Relay)}
In this case both links are active at the same time. As a result, the signals from the relay transmitter interfere with the receiving signal at the relay receiver. We assume that an estimate of the self-interference (SI) channel $\mathbf{H}_{r}$ is available at the relay denoted by $\hat{\mathbf{H}}_{r}$. Hence, the unknown channel estimation error (residual self-interference channel) represented by $\bar{\mathbf{H}}_{r}$ is given as
\begin{align}
\bar{\mathbf{H}}_{r}=\mathbf{H}_{r}-\hat{\mathbf{H}}_{r}.
\end{align}
In this work, we assume that some portion of the SI is canceled based on the available estimate $\hat{\mathbf{H}}_{r}$, such that only a residual self-interference (RSI) remains. Here, we represent this portion by $\bar{\mathbf{H}}_{r}\mathbf{x}_{r}$. Considering a full-duplex decode-and-forward relay, the following rate is achievable
\begin{align}
R^{\mathrm{FD}}=\min(R^{\mathrm{FD}}_\mathrm{sr},R^{\mathrm{FD}}_\mathrm{rd}),
\end{align}
in which
\begin{align}
R^{\mathrm{FD}}_\mathrm{sr}&=\log_2\frac{\big|\mathbf{I}_{K_{r}}+\mathbf{H}_1\mathbf{Q}_{s}\mathbf{H}^H_1+\bar{\mathbf{H}}_{r}\mathbf{Q}_{r}\bar{\mathbf{H}}^H_{r}\big|}{\big|\mathbf{I}_{K_{r}}+\bar{\mathbf{H}}_{r}\mathbf{Q}_{r}\bar{\mathbf{H}}^H_{r}\big|},\label{eq:FD_srA}\\
R^{\mathrm{FD}}_\mathrm{rd}&=\log_2\big|\mathbf{I}_{N}+\mathbf{H}_2\mathbf{Q}_{r}\mathbf{H}^H_2\big|.
\end{align}
Notice that, with perfect SI channel state information, the SI could be completely removed from the received signal at the relay input-frontend. However, assuming that the RSI remains uncanceled, a robust transceiver against the worst-case RSI channel is required which is formulated as an optimization problem as follows
\begin{subequations}\label{P:FDa}
\begin{align}
 \max_{\mathbf{Q}_{s},\mathbf{Q}_{r}}\  \min_{\bar{\mathbf{H}}_{r}}\quad & \min\bigg( R^{\mathrm{FD}}_\mathrm{sr}, R^{\mathrm{FD}}_\mathrm{rd} \bigg) \tag{\ref{P:FDa}}\\
\text{subject to}\quad\quad & \mathrm{Tr}(\mathbf{Q}_{s})\leq P_{s},\label{P:FDa:ConsA}\\ 
&\mathrm{Tr}(\mathbf{Q}_{r})\leq P_{r},\label{P:FDa:ConsB}\\
&\mathrm{Tr}(\bar{\mathbf{H}}_{r}\bar{\mathbf{H}}^H_{r})\leq T,\label{P:FDa:ConsC}
\end{align}
\end{subequations}

in which the throughput rate with respect to the worst-case RSI channel is maximized. In constraint~\eqref{P:FDa:ConsC}, $T$ represents the RSI channel uncertainty bound. Notice that, $\mathrm{Tr}(\bar{\mathbf{H}}_{r}\bar{\mathbf{H}}^H_{r})$ represents the sum of the squared singular values of $\mathbf{H}_\mathrm{r}$. It should be noted that, using a bounded matrix norm is the most common way for modeling the uncertainty of a matrix~\cite{wang2013robust,shen2013robust}. Next, we investigate the optimal design for the full-duplex relay with the worst-case RSI.

Using the following theorem and lemma, we show that for every possible choice of ${\mathbf{H}}_\mathrm{1}$ and ${\mathbf{H}}_\mathrm{2}$, there exists at least one set of simultaneously diagonalizable matrices ${\mathbf{H}}_{r}$, ${\mathbf{Q}}_{s}$ and ${\mathbf{Q}}_{r}$ that are the solutions to the problem~\eqref{P:FDa}.
%
%
\begin{lemma}\label{lem:1}
For two positive semi-definite and positive definite matrices $\mathbf{A}$ and $\mathbf{B}$ with eigenvalues $\lambda_1\left(\mathbf{A}\right)\geq \lambda_2\left(\mathbf{A}\right) \geq ... \geq \lambda_N\left(\mathbf{A}\right) $ and $\lambda_1 \left(\mathbf{B}\right) \geq \lambda_2\left(\mathbf{B}\right) \geq ... \geq \lambda_N\left(\mathbf{B}\right)$ respectively, the following inequalities hold,
\begin{equation}
    \prod_{i = 1}^{N} \left( 1 + \frac{\lambda_i\left(\mathbf{A}\right)}{\lambda_i\left(\mathbf{B}\right)} \right) \leq \Big| \mathbf{I}+\mathbf{A}\mathbf{B}^{-1} \Big| \leq \prod_{i = 1}^{N} \left( 1 + \frac{\lambda_i\left(\mathbf{A}\right)}{\lambda_{N+1-i}\left(\mathbf{B}\right)} \right).\label{eq:Fiedler0}
\end{equation}

\begin{proof}
Consider the Fiedler's inequality given by~\cite{fiedler1971bounds},
\begin{equation}
\prod_{i = 1}^{N} \left( \lambda_i\left(\mathbf{A}\right) + \lambda_i\left(\mathbf{B}\right) \right) \leq \Big| \mathbf{B}+\mathbf{A} \Big| \leq \prod_{i = 1}^{N} \left( \lambda_i\left(\mathbf{A}\right) + \lambda_{N+1-i}\left(\mathbf{B}\right) \right).\label{eq:Fiedler}
\end{equation}
Furthermore, given $\mathbf{B}$ as a positive definite matrix, the followings hold,
\begin{align}
\big| \mathbf{B}^{-1} \big| &> 0,\\
\big| \mathbf{B}^{-1} \big| &= \prod_{i = 1}^{N}\frac{1}{\lambda_i\left(\mathbf{B}\right)}.
\end{align}
Now, dividing the sides of~\eqref{eq:Fiedler} by $\big|\mathbf{B}\big|$, one can readily obtain~\eqref{eq:Fiedler0}.
\end{proof}
\end{lemma}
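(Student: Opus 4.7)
The plan is to reduce the claim to a direct application of Fiedler's inequality for the determinant of a sum of Hermitian matrices. The key observation is the determinant identity
\begin{equation*}
\bigl|\mathbf{I} + \mathbf{A}\mathbf{B}^{-1}\bigr| \;=\; \bigl|(\mathbf{B} + \mathbf{A})\mathbf{B}^{-1}\bigr| \;=\; \frac{\bigl|\mathbf{B} + \mathbf{A}\bigr|}{\bigl|\mathbf{B}\bigr|},
\end{equation*}
which is well defined because $\mathbf{B}\succ\mathbf{0}$ implies $|\mathbf{B}|>0$. So the whole problem is really a statement about $|\mathbf{B}+\mathbf{A}|$ divided by a product of eigenvalues of $\mathbf{B}$.

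First, I would quote Fiedler's inequality~\cite{fiedler1971bounds} verbatim, which sandwiches $|\mathbf{B}+\mathbf{A}|$ between a product with eigenvalues paired in the same order and a product with eigenvalues paired in reversed order. Next, I would use the fact that for a positive definite matrix $|\mathbf{B}| = \prod_{i=1}^{N}\lambda_i(\mathbf{B})$, and divide both the lower and upper bounds of Fiedler's inequality by this product. On the lower-bound side, each factor $\lambda_i(\mathbf{A})+\lambda_i(\mathbf{B})$ divided by $\lambda_i(\mathbf{B})$ yields exactly $1+\lambda_i(\mathbf{A})/\lambda_i(\mathbf{B})$. On the upper-bound side, one has to be slightly careful about the index matching: the factor $\lambda_i(\mathbf{A})+\lambda_{N+1-i}(\mathbf{B})$ should be divided by $\lambda_{N+1-i}(\mathbf{B})$ rather than $\lambda_i(\mathbf{B})$, which is legitimate because the product over $i$ of $\lambda_{N+1-i}(\mathbf{B})$ is just a reordering of $\prod_i \lambda_i(\mathbf{B}) = |\mathbf{B}|$.

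I do not expect a real obstacle here; the result is essentially an algebraic rearrangement of Fiedler's inequality, and the only subtlety is making the reindexing on the upper bound explicit so that the denominator seen inside each factor matches the eigenvalue appearing next to $\lambda_i(\mathbf{A})$ in the numerator. I would finish by noting that positive semi-definiteness of $\mathbf{A}$ and positive definiteness of $\mathbf{B}$ are exactly what is needed to guarantee all eigenvalues are nonnegative and all denominators are strictly positive, so the bounds in~\eqref{eq:Fiedler0} hold.
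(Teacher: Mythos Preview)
Your proposal is correct and follows essentially the same approach as the paper: quote Fiedler's inequality for $|\mathbf{A}+\mathbf{B}|$, use $|\mathbf{B}|=\prod_i\lambda_i(\mathbf{B})>0$, and divide through to obtain~\eqref{eq:Fiedler0}. The only addition you make is spelling out the reindexing on the upper bound, which the paper leaves implicit.
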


Note that, in~\eqref{eq:Fiedler0} the inequalities hold with equalities if and only if $\mathbf{A}$ and $\mathbf{B}$ are diagonalizable over a common basis. Using the result of lemma~\ref{lem:1}, one can obtain
\begin{align}
       \log_2 \big|\mathbf{I}_{K_{r}}+&\mathbf{H}_1\mathbf{Q}_{s}\mathbf{H}^H_1   \left(\mathbf{I}+\bar{\mathbf{H}}_{r}\mathbf{Q}_{r}\bar{\mathbf{H}}^H_{r}\right)^{-1}\big|
        \nonumber\\ &\geq \sum_{i = 1}^{\min{(M,K_r)}} \log_2 \left( 1+ \frac{\lambda_i\left(\mathbf{H}_1\mathbf{Q}_{s}\mathbf{H}^H_1\right)}{\lambda_i\left(\mathbf{I}+\bar{\mathbf{H}}_{r} \mathbf{Q}_{r} \bar{\mathbf{H}}^H_{r}\right)} \right). \label{eq:InequalityA}
\end{align}
Also it holds that $\lambda_i\left(\mathbf{I}+\bar{\mathbf{H}}_{r}\mathbf{Q}_{r}\bar{\mathbf{H}}^H_{r}\right) = 1 + \lambda_i\left(\bar{\mathbf{H}}_{r}\mathbf{Q}_{r}\bar{\mathbf{H}}^H_{r}\right)$. Hence, we obtain
\begin{align}
       \log_2\big|\mathbf{I}_{K_{r}}+&\mathbf{H}_1\mathbf{Q}_{s}\mathbf{H}^H_1   \left(\mathbf{I}+\bar{\mathbf{H}}_\mathrm{r}\mathbf{Q}_{r}\bar{\mathbf{H}}^H_\mathrm{r}\right)^{-1}\big|
        \nonumber\\ &\geq \sum_{i = 1}^{\min{(M,K_r)}} \log_2 \left( 1+ \frac{\lambda_i\left(\mathbf{H}_1\mathbf{Q}_{s}\mathbf{H}^H_1\right)}{1+\lambda_i\left(\bar{\mathbf{H}}_{r}\mathbf{Q}_{r}\bar{\mathbf{H}}^H_{r}\right)} \right). \label{eq:InequalityAA}
\end{align}
Note that, the inequality holds with equality whenever $\mathbf{H}_1\mathbf{Q}_\mathrm{s}\mathbf{H}^H_1$ and $\left(\mathbf{I}+\bar{\mathbf{H}}_\mathrm{r}\mathbf{Q}_\mathrm{r}\bar{\mathbf{H}}^H_\mathrm{r} \right)^{-1}$ share a common basis which in turn, means $\mathbf{H}_1\mathbf{Q}_\mathrm{s}\mathbf{H}^H_1$ and $\bar{\mathbf{H}}_\mathrm{r}\mathbf{Q}_\mathrm{r}\bar{\mathbf{H}}^H_\mathrm{r}$ occupy a common basis. Now, instead of doing the minimization over the left-hand side (LHS) of equation~\eqref{eq:InequalityAA}, we can first minimize the right-hand side (RHS) to find the optimum eigenvalues. Then show that there exists matrices with those optimum eigenvalues and as a result, the inequality becomes equality.
\begin{remark}\label{Remark1}
Having equality $\mathbf{C}=\bar{\mathbf{H}}_{r}\mathbf{Q}_{r}\bar{\mathbf{H}}^H_{r}$, one can generally conclude $\det \left(\mathbf{C}\right)=\det\left(\bar{\mathbf{H}}^H_{r}\bar{\mathbf{H}}_{r}\right)\det\left(\mathbf{Q}_{r}\right)$. Further, using the properties of determinant we can also conclude $\prod_{i=1}^N\lambda_i\left(\mathbf{C}\right)=\prod_{i=1}^N\left(\lambda_{\rho\left(i\right)}\left(\bar{\mathbf{H}}^H_{r}\bar{\mathbf{H}}_{r}\right)\lambda_i\left(\mathbf{Q}_{r}\right)\right)$ where ${\rho\left(i\right)}$ is a random permutation of $i$ and indicates that there is no need for $\lambda_{\rho\left(i\right)}\left(\bar{\mathbf{H}}^H_{r}\bar{\mathbf{H}}_{r}\right)$ to be in decreasing order. However, one cannot generally conclude $\lambda_i\left(\mathbf{C}\right)=\lambda_{\rho\left(i\right)}\left(\bar{\mathbf{H}}^H_{r}\bar{\mathbf{H}}_{r}\right)\lambda_i\left(\mathbf{Q}_{r}\right)$ for every single $i$, unless $\bar{\mathbf{H}}^H_{r}\bar{\mathbf{H}}_{r}$ and $\mathbf{Q}_{r}$ share common basis.
\end{remark}
As a result of Remark~\ref{Remark1}, in a general case, we cannot rewrite~\eqref{eq:InequalityAA} in terms of $\lambda_i\left(\bar{\mathbf{H}}^H_{r}\bar{\mathbf{H}}_{r}\right)$, $\lambda_i\left(\mathbf{Q}_{r}\right)$, $\lambda_i\left({\mathbf{H}}^H_\mathrm{1}{\mathbf{H}}_\mathrm{1}\right)$ and $\lambda_i\left(\mathbf{Q}_{s}\right)$. However, if we show that for every choice of $\mathbf{Q}_{s}$, there exists a matrix $\mathbf{Q}'_{s}$ with properties: 1) $\lambda_i({\mathbf{H}}_\mathrm{1}\mathbf{Q}_{s}{\mathbf{H}}^H_\mathrm{1})=\lambda_i({\mathbf{H}}_\mathrm{1}\mathbf{Q}'_{s}{\mathbf{H}}^H_\mathrm{1})$; 2) $\lambda_i({\mathbf{H}}_\mathrm{1}\mathbf{Q}'_{s}{\mathbf{H}}^H_\mathrm{1}) = \lambda_i(\mathbf{Q}'_{s})\lambda_i({\mathbf{H}}^H_\mathrm{1}{\mathbf{H}}_\mathrm{1})$ and 3) $\text{Tr}(\mathbf{Q}'_{s}) \leq \text{Tr}(\mathbf{Q}_{s})$; then we can use $\mathbf{Q}'_{s}$ instead and rewrite ~\eqref{eq:InequalityAA} in terms of $\lambda_i\left({\mathbf{H}}^H_\mathrm{1}{\mathbf{H}}_\mathrm{1}\right)$ and $\lambda_i\left(\mathbf{Q}'_{s}\right)$ to simplify the problem. The first property implies that both $\mathbf{Q}_{s}$ and $\mathbf{Q}'_{s}$ have the exact same impact on the capacity. Hence, if we find  a $\mathbf{Q}_{s}$ which is the solution to the problem~\eqref{P:FDa}, its corresponding $\mathbf{Q}'_{s}$ will also be a solution. The second property means, unlike $\mathbf{Q}_{s}$, $\mathbf{Q}'_{s}$ actually shares the common basis with ${\mathbf{H}}^H_\mathrm{1}{\mathbf{H}}_\mathrm{1}$. The last property implies that $\mathbf{Q}'_{s}$ is at least as good as $\mathbf{Q}_{s}$ in terms of power consumption. Observe that if we show for every feasible $\mathbf{Q}_{s}$ there exists at least one such $\mathbf{Q}'_{s}$, then we can solve the problem~\eqref{P:FDa} in a much easier way. The reason is, in such a case, instead of searching for optimal $\mathbf{Q}_{s}$ on the whole feasible set, we can search for the optimal $\mathbf{Q}'_{s}$. Unlike $\mathbf{Q}_{s}$, finding $\mathbf{Q}'_{s}$ does not need a complete search on the whole feasible set since $\mathbf{Q}'_{s}$ shares the common basis with ${\mathbf{H}}^H_\mathrm{1}{\mathbf{H}}_\mathrm{1}$. Therefore, we can limit our search only on the portion of the feasible set in which matrices have eigendirections identical to those of ${\mathbf{H}}^H_\mathrm{1}{\mathbf{H}}_\mathrm{1}$. Similarly, if we show for every choice of $\bar{\mathbf{H}}_{r}$, there exist at least one $\bar{\mathbf{H}}'_{r}$ for which we have three conditions $\lambda_i(\bar{\mathbf{H}}_{r}\mathbf{Q}_{r}\bar{\mathbf{H}}^H_{r})=\lambda_i(\bar{\mathbf{H}}'_{r}\mathbf{Q}_{r}\bar{\mathbf{H}}'^H_{r})$, $\lambda_i(\bar{\mathbf{H}}_{r}\mathbf{Q}_{r}\bar{\mathbf{H}}'^H_{r}) = \lambda_i(\mathbf{Q}_{r})\lambda_i(\bar{\mathbf{H}}'^H_\mathrm{1}\bar{\mathbf{H}}'_\mathrm{1})$ and $\text{Tr}(\bar{\mathbf{H}}'^H_\mathrm{1}\bar{\mathbf{H}}'_\mathrm{1}) \leq \text{Tr}(\bar{\mathbf{H}}^H_\mathrm{1}\bar{\mathbf{H}}_\mathrm{1})$, we can simplify our search to finding $\bar{\mathbf{H}}'_{r}$ instead of $\bar{\mathbf{H}}_{r}$. In the next theorem, we show such $\mathbf{Q}'_{s}$ and $\bar{\mathbf{H}}'_{r}$ exist.


\begin{theorem}\label{Theorem:1}
For every matrices $\mathbf{Q}_{s}$ and ${\mathbf{H}}_\mathrm{1}$, there exists at least one matrix $\mathbf{Q}'_{s}$ that satisfies the following conditions, 
\begin{align}
    \lambda_i \big( \mathbf{H}_1\mathbf{Q}_{s}\mathbf{H}^H_1 \big) &= \lambda_i \big( \mathbf{H}_1\mathbf{Q}'_{s}\mathbf{H}^H_1 \big),\label{P:newo1a}\\ 
    \lambda_i({\mathbf{H}}_\mathrm{1}\mathbf{Q}'_{s}{\mathbf{H}}^H_\mathrm{1}) &= \lambda_{\rho\left(i\right)}(\mathbf{Q}'_{s})\lambda_i({\mathbf{H}}^H_\mathrm{1}{\mathbf{H}}_\mathrm{1}),\label{P:newo1b}\\
    \text{Tr}(\mathbf{Q}'_{s}) &\leq \text{Tr}(\mathbf{Q}_{s}).\label{P:newo1c}
\end{align} \label{P:newo1}
Similarly, for every $\bar{\mathbf{H}}_{r}$ and $\mathbf{Q}_{r}$, there exists at least one matrix $\bar{\mathbf{H}}'_{r}$  for which we have
\begin{align}   
    \lambda_i \big( \bar{\mathbf{H}}_{r}\mathbf{Q}_{r}\bar{\mathbf{H}}^H_{r} \big) &= \lambda_i \big( \bar{\mathbf{H}}'_{r}\mathbf{Q}_{r}\bar{\mathbf{H}}'^H_{r} \big),\\
    \lambda_i\left(\bar{\mathbf{H}}'_{r}\mathbf{Q}_{r}\bar{\mathbf{H}}'^H_{r}\right)&=\lambda_{\rho\left(i\right)}\left(\bar{\mathbf{H}}'_{r}\bar{\mathbf{H}}'^H_{r}\right)\lambda_i\left(\mathbf{Q}_{r}\right),\\
    \text{Tr}(\bar{\mathbf{H}}'_{r}\bar{\mathbf{H}}'^H_{r}) &\leq \text{Tr}(\bar{\mathbf{H}}_{r}\bar{\mathbf{H}}^H_{r}).\label{P:newo2}
\end{align}
\begin{proof}
The proof is given in Appendix I.
\end{proof}
\end{theorem}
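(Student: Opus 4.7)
The plan is to construct $\mathbf{Q}'_s$ and $\bar{\mathbf{H}}'_r$ explicitly by aligning their eigenbases with the right-singular basis of $\mathbf{H}_1$ and the eigenbasis of $\mathbf{Q}_r$, respectively, and then verifying the three conditions in turn. Let $\mathbf{H}_1 = \mathbf{L}_1\mathbf{\Sigma}_1\mathbf{R}_1^H$ be the SVD with singular values $\sigma_i$, and write $\beta_i = \lambda_i(\mathbf{H}_1\mathbf{Q}_s\mathbf{H}_1^H)$ in decreasing order. I define
\[
\mathbf{Q}'_s = \mathbf{R}_1 \mathbf{\Lambda}' \mathbf{R}_1^H, \qquad \mathbf{\Lambda}' = \mathrm{diag}(\beta_i/\sigma_i^2),
\]
with the convention $0/0 := 0$ wherever $\sigma_i=0$. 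Then $\mathbf{H}_1\mathbf{Q}'_s\mathbf{H}_1^H = \mathbf{L}_1(\mathbf{\Sigma}_1\mathbf{\Lambda}'\mathbf{\Sigma}_1^H)\mathbf{L}_1^H$ has eigenvalues $\sigma_i^2\cdot(\beta_i/\sigma_i^2)=\beta_i$, which is~\eqref{P:newo1a}. Since $\mathbf{Q}'_s$ and $\mathbf{H}_1^H\mathbf{H}_1$ now share the basis $\mathbf{R}_1$, condition~\eqref{P:newo1b} follows immediately with $\rho$ taken to be the permutation that sorts $(\beta_i/\sigma_i^2)_i$ into decreasing order.

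The substantive step is the trace inequality~\eqref{P:newo1c}. First, I would argue that $\mathbf{Q}_s$ may be assumed to be supported on the coimage of $\mathbf{H}_1$: projecting $\mathbf{Q}_s$ onto that subspace preserves $\mathbf{H}_1\mathbf{Q}_s\mathbf{H}_1^H$ while weakly reducing $\text{Tr}(\mathbf{Q}_s)$, so it suffices to work with the reduced SVD where $\mathbf{\Sigma}_1$ is $r\times r$ invertible with $r=\mathrm{rank}\,\mathbf{H}_1$. Setting $\tilde{\mathbf{Q}}_s=\mathbf{R}_1^H\mathbf{Q}_s\mathbf{R}_1$ and $\mathbf{D}=\mathbf{\Sigma}_1\tilde{\mathbf{Q}}_s\mathbf{\Sigma}_1^H$ (positive semidefinite with eigenvalues $\beta_i$), cyclicity of the trace yields
\[
\text{Tr}(\mathbf{Q}_s) \;=\; \text{Tr}(\tilde{\mathbf{Q}}_s) \;=\; \text{Tr}\!\left(\mathbf{D}\,\mathbf{\Sigma}_1^{-2}\right),
\]
where $\mathbf{\Sigma}_1^{-2}$ is diagonal with entries $1/\sigma_i^2$. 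Because the eigenvalues of $\mathbf{D}$ are in decreasing order while those of $\mathbf{\Sigma}_1^{-2}$ are in increasing order (as $\sigma_i^2$ is decreasing), von Neumann's trace inequality in its lower-bound form for two Hermitian positive semidefinite matrices gives $\text{Tr}(\mathbf{D}\,\mathbf{\Sigma}_1^{-2})\geq \sum_i \beta_i/\sigma_i^2 = \text{Tr}(\mathbf{Q}'_s)$, which is exactly~\eqref{P:newo1c}.

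For the dual statement about $\bar{\mathbf{H}}'_r$, I would interchange the roles of covariance and channel: diagonalise $\mathbf{Q}_r=\mathbf{U}_r\mathbf{\Gamma}_r\mathbf{U}_r^H$ and set $\bar{\mathbf{H}}'_r=\mathbf{V}'\mathbf{\Pi}'\mathbf{U}_r^H$ for any unitary $\mathbf{V}'$ and a rectangular diagonal $\mathbf{\Pi}'$ whose squared entries are $\beta_i/\Gamma_{r,i}$, where now $\beta_i=\lambda_i(\bar{\mathbf{H}}_r\mathbf{Q}_r\bar{\mathbf{H}}_r^H)$. The two eigenvalue identities are then immediate from the construction, and the inequality $\text{Tr}(\bar{\mathbf{H}}'_r\bar{\mathbf{H}}_r'^H)\leq \text{Tr}(\bar{\mathbf{H}}_r\bar{\mathbf{H}}_r^H)$ reduces, via the same cyclic-trace manipulation, to von Neumann's inequality applied to the Hermitian PSD matrix $\mathbf{\Gamma}_r^{1/2}\mathbf{U}_r^H\bar{\mathbf{H}}_r^H\bar{\mathbf{H}}_r\mathbf{U}_r\mathbf{\Gamma}_r^{1/2}$ together with the diagonal $\mathbf{\Gamma}_r^{-1}$. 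I expect the main obstacle to be precisely the trace inequality~\eqref{P:newo1c}: a tempting direct attack using Horn's log-majorization $\prod_{i=1}^k \beta_i \leq \prod_{i=1}^k \sigma_i^2\,\lambda_i(\mathbf{Q}_s)$ does not pass cleanly to an additive bound on $\sum \beta_i/\sigma_i^2$ because $(\beta_i/\sigma_i^2)_i$ need not be monotone; routing through von Neumann's trace inequality, which exploits the \emph{opposite} orderings of $\beta_i$ and $1/\sigma_i^2$ to make the lower bound sharp, is what makes the argument go through.
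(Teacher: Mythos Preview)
Your construction and verification are correct, and the route you take to the trace inequality~\eqref{P:newo1c} is genuinely different from the paper's. The paper proceeds through majorization theory: it invokes the lemma $\log\boldsymbol{\lambda}(\mathbf{AB})-\log\boldsymbol{\lambda}(\mathbf{B})\prec\log\boldsymbol{\lambda}(\mathbf{A})$ (Marshall--Olkin, H.1.e) to obtain $\boldsymbol{\lambda}(\mathbf{Q}'_s)\prec_\times\boldsymbol{\lambda}(\mathbf{Q}_s)$, and then appeals to the implication $\prec_\times\,\Rightarrow\,\prec_w$ (Marshall--Olkin, 5.A.2.b) to convert multiplicative majorization into weak additive majorization, from which the trace bound drops out. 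Your argument sidesteps this chain entirely: the cyclic rewrite $\mathrm{Tr}(\mathbf{Q}_s)=\mathrm{Tr}(\mathbf{D}\,\mathbf{\Sigma}_1^{-2})$ followed by the lower von~Neumann trace inequality gives~\eqref{P:newo1c} in one stroke, using only that $\beta_i$ and $1/\sigma_i^2$ are oppositely ordered. This is more elementary and more direct for the statement as written; the paper's route, on the other hand, actually establishes the stronger conclusion $\boldsymbol{\lambda}(\mathbf{Q}'_s)\prec_w\boldsymbol{\lambda}(\mathbf{Q}_s)$, which controls all partial sums rather than just the total trace. Your closing remark about Horn's product inequality is on target: the naive bound $\prod_{i\le k}\beta_i\le\prod_{i\le k}\sigma_i^2\lambda_i(\mathbf{Q}_s)$ does not by itself give multiplicative majorization of the \emph{sorted} sequence $(\beta_i/\sigma_i^2)^\downarrow$, and it is precisely the sharper lemma H.1.e (which majorizes after sorting) that makes the paper's argument close.
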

For the sake of simplicity, we use following notions for the rest of the paper,
\begin{align}
&\lambda_i(\mathbf{Q}_{s}) = \gamma_{s_i},\\
&\lambda_i(\mathbf{Q}_{r}) = \gamma_{r_i},\\
&\lambda_i({\mathbf{H}}_\mathrm{1}{\mathbf{H}}'^H_\mathrm{1}) = \sigma^2_{1_i},\\
&\lambda_i(\bar{\mathbf{H}}_{r}\bar{\mathbf{H}}'^H_{r}) = \sigma^2_{r_i}.
\end{align}

Now, using Theorem~\ref{Theorem:1} alongside Lemma~\ref{lem:1}, we infer that with no loss of generality, we can use $\mathbf{Q}'_{s}$ and $\bar{\mathbf{H}}'_{r}$ instead of $\mathbf{Q}_{s}$ and $\bar{\mathbf{H}}_{r}$, respectively. This is helpful because all $\mathbf{Q}'_{s}$, ${\mathbf{H}}^H_\mathrm{1}{\mathbf{H}}_\mathrm{1}$, $\mathbf{Q}_{r}$ and $\bar{\mathbf{H}}'^H_{r}\bar{\mathbf{H}}'_{r}$ share the common basis. Hence, problem \eqref{P:FDa} is reformulated as,
\begin{subequations}\label{P:FDbbb}
\begin{align}
\max_{\boldsymbol{\gamma_{s}},\boldsymbol{\gamma_{r}}} \quad  \min_{\boldsymbol{\sigma_{r}}}\quad & \min\Bigg( \sum_{i=1}^{\min(M,K_{r})} \!\log{\!\left(1+
\frac{\sigma_{1_i}^2 \gamma_{{s}_{{\rho{\left(i\right)}}}}}{1+\gamma_{{r}_i}\sigma_{{r}_{{\rho{\left(i\right)}}}}^2}\right)},\nonumber\\ &{\hspace*{1cm}\sum_{i=1}^{{\min(K_{\mathrm{t}},N)}} \!\log{\!\Big(1+\sigma^2_{2_i} \gamma_{r_i}\Big)} \Bigg)}\\
\text{subject to}\quad & \|\boldsymbol{\gamma}_{s}\|_1\leq P_{s},\label{34A}\\
&\|\boldsymbol{\gamma}_{r}\|_1 \leq P_{r},\label{34B}\\
&\|\boldsymbol{\sigma}^2_{r}\|_1\leq T,\label{34C}\\
&\sigma_{1_i}^2\gamma_{{s}_{\rho{\left(i\right)}}} \geq \sigma_{1_{i+1}}^2\gamma_{{s}_{\rho{\left(i+1\right)}}},\ \forall i \leq \min(M,K_\mathrm{r}),\label{P:FDb:ConsA}\\
&\gamma_{\mathrm{r}_i}\sigma_{\mathrm{r}_{\rho{\left(i\right)}}}^2 \geq \gamma_{\mathrm{r}_{i+1}}\sigma_{\mathrm{r}_{\rho{\left(i+1\right)}}}^2,\ {\forall i \leq \min(K_{\mathrm{t}},N).}\label{P:FDb.ConsB}
\end{align}
\end{subequations}
Note that, the two additional constraints~\eqref{P:FDb:ConsA} and~\eqref{P:FDb.ConsB} need to be satisfied due to the conditions of Lemma~\ref{lem:1} (i.e. eigenvalues have to be in decreasing order). 
Interestingly, these two additional constraints are affine. The above optimization problem can further be simplified using the following lemma,
\begin{lemma}\label{lemm2}
The objective function of the optimization problem~\eqref{P:FDbbb} is optimized when the constraints\eqref{34A} and~\eqref{34C} are satisfied with equality.
\end{lemma}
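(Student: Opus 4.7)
The plan is to exploit coordinate-wise monotonicity of the inner objective in $\boldsymbol{\gamma}_s$ and $\boldsymbol{\sigma}_r$, together with the fact that uniform positive rescaling preserves the ordering constraints~\eqref{P:FDb:ConsA} and~\eqref{P:FDb.ConsB}. The decisive structural observation is that $\boldsymbol{\gamma}_s$ and $\boldsymbol{\sigma}_r$ each appear in only one of the two branches of the min-of-mins, so their effect on the overall min can be read off without ambiguity.

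First I would make the dependence explicit. The summand
\[
\log\!\left(1+\frac{\sigma_{1_i}^2\gamma_{s_{\rho(i)}}}{1+\gamma_{r_i}\sigma_{r_{\rho(i)}}^2}\right)
\]
is strictly increasing in every $\gamma_{s_j}$ and strictly decreasing in every $\sigma_{r_j}^2$, whereas $\sum_i\log(1+\sigma_{2_i}^2\gamma_{r_i})$ is independent of both vectors. The pointwise minimum of a monotone and a constant function preserves the direction of monotonicity, and taking an infimum over $\boldsymbol{\sigma}_r$ preserves monotonicity in $\boldsymbol{\gamma}_s$. Hence the whole inner objective is non-decreasing in $\boldsymbol{\gamma}_s$ for each fixed $(\boldsymbol{\gamma}_r,\boldsymbol{\sigma}_r)$ and non-increasing in $\boldsymbol{\sigma}_r^2$ for each fixed $(\boldsymbol{\gamma}_s,\boldsymbol{\gamma}_r)$.

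Next I would show that~\eqref{34A} must be tight at the outer maximum. Given any feasible point with $\|\boldsymbol{\gamma}_s\|_1<P_s$, replace $\boldsymbol{\gamma}_s$ by $\boldsymbol{\gamma}_s'=c\,\boldsymbol{\gamma}_s$ with $c=P_s/\|\boldsymbol{\gamma}_s\|_1>1$. Since~\eqref{P:FDb:ConsA} compares quantities of the form $\sigma_{1_i}^2\gamma_{s_{\rho(i)}}$, multiplying every $\gamma_{s_j}$ by the same positive constant scales both sides by $c$ and preserves the ordering; hence $\boldsymbol{\gamma}_s'$ remains feasible and, by the monotonicity just established, the value cannot decrease. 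An entirely dual argument handles~\eqref{34C}: given $\|\boldsymbol{\sigma}_r^2\|_1<T$, rescaling by $T/\|\boldsymbol{\sigma}_r^2\|_1>1$ preserves~\eqref{P:FDb.ConsB} and cannot increase the inner minimum, so the adversary attains its minimum at $\|\boldsymbol{\sigma}_r^2\|_1=T$.

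The only delicate point is bookkeeping on the ordering constraints, which survive the rescaling trivially because a common positive factor cancels on both sides of each inequality. As a sanity check, I would briefly note why the same argument fails for~\eqref{34B}: the variable $\gamma_{r_i}$ appears with opposite signs in the two branches of the min (positively in $R^{\mathrm{FD}}_\mathrm{rd}$, negatively through the denominator of $R^{\mathrm{FD}}_\mathrm{sr}$), so the clean monotonicity is lost and~\eqref{34B} need not be active at the optimum.
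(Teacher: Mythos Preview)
Your argument is correct, and it is genuinely different from the paper's. You exploit coordinate-wise monotonicity directly: since $R^{\mathrm{FD}}_\mathrm{sr}$ is non-decreasing in each $\gamma_{s_j}$ and non-increasing in each $\sigma_{r_j}^2$, while $R^{\mathrm{FD}}_\mathrm{rd}$ is independent of both, a \emph{uniform} rescaling of $\boldsymbol{\gamma}_s$ (resp.\ $\boldsymbol{\sigma}_r^2$) pushes the objective in the desired direction. Feasibility is preserved because a common positive factor cancels on both sides of~\eqref{P:FDb:ConsA} and~\eqref{P:FDb.ConsB}, and the inf over $\boldsymbol{\sigma}_r$ preserves monotonicity in $\boldsymbol{\gamma}_s$. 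This handles the max--min order cleanly and in one line each.

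The paper instead argues by contradiction with a \emph{non-uniform} perturbation: it adds increments $\varepsilon_i$ chosen so that the water-filling solution of the \emph{other} player (i.e., the optimal $\boldsymbol{\gamma}_s$ when perturbing $\boldsymbol{\sigma}_r$, and the Lagrangian stationarity for $\boldsymbol{\sigma}_r$ when perturbing $\boldsymbol{\gamma}_s$) is left unchanged. This extra care is unnecessary for the lemma as stated, because in the max--min problem the inner variable is minimized for \emph{fixed} outer variables; but it reflects the paper's saddle-point viewpoint underlying Algorithm~1, where $\boldsymbol{\gamma}_s$ and $\boldsymbol{\sigma}_r$ are updated alternately and one wants each best response to remain valid. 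Your uniform-rescaling proof is shorter and more transparent for the lemma; the paper's construction gives additional structural information (invariance of the partner's optimum) that is useful for the iterative algorithm but not needed here. Your closing remark on why~\eqref{34B} need not be active is also correct and matches the paper's later discussion.
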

\begin{proof}
Proof is given in the Appendix II.
\end{proof}
Exploiting lemma~\ref{lemm2}, problem~\eqref{P:FDbbb} is now reduced to,
\begin{subequations}\label{P:FDbA}
\begin{align}
\max_{\boldsymbol{\gamma_\mathrm{s}},\boldsymbol{\gamma_\mathrm{r}}}\  \min_{\boldsymbol{\sigma_\mathrm{r}}}\quad & \min\Bigg( \sum_{i=1}^{\min(M,K_\mathrm{r})} \!\log{\!\left(1+
\frac{\sigma_{1_i}^2\gamma_{\mathrm{s}_{\rho{\left(i\right)}}}}{1+\gamma_{\mathrm{r}_i}\sigma_{\mathrm{s}_{\rho{\left(i\right)}}}^2}\right)},\nonumber\\ &{\hspace*{1cm}\sum_{i=1}^{{\min(K_{\mathrm{t}},N)}} \!\log{\!\Big(1+\sigma^2_{2_i} \gamma_{\mathrm{r}_i}\Big)} \Bigg)} \tag{\ref{P:FDbA}}\\
\text{subject to}\quad & \|\boldsymbol{\gamma}_\mathrm{s}\|_1= P_\mathrm{s},\\
&\|\boldsymbol{\gamma}_{r}\|_1 \leq P_{r},\\
&\|\boldsymbol{\sigma}^2_{r}\|_1= T,\\
&\sigma_{1_i}^2\gamma_{\mathrm{s}_{\rho{\left(i\right)}}} \geq \sigma_{1_{i+1}}^2\gamma_{\mathrm{s}_{\rho\left(i+1\right)}},\ \forall i \leq \min(M,K_\mathrm{r}),\label{P:FDbA:ConsA}\\
&\gamma_{\mathrm{r}_i}\sigma_{\mathrm{s}_{\rho{\left(i\right)}}}^2 \geq \gamma_{\mathrm{r}_{i+1}}\sigma_{\mathrm{r}_{\rho\left(i+1\right)}}^2,\ {\forall i \leq \min(K_{\mathrm{t}},N).}\label{P:FDbA.ConsB}
\end{align}
\end{subequations}
%
%
%
%
%
\begin{algorithm}
\caption{Robust Transceiver Design}
\begin{algorithmic}[1]
\State Determine $\boldsymbol\gamma^{(l)}_{r}=[\tau^{(l)}_{r}-\frac{1}{\boldsymbol{\sigma}^2_2}]^{+}$, s.t. $\|\boldsymbol\gamma^{(l)}_{r}\|_1=\bar{P}_{r}^{(l)}$
\State Determine $\bar{\boldsymbol\gamma}^{(l)}_{r}=\boldsymbol\gamma^{(l)}_{r}(1:\min(M,K_\text{t}))$
\State Define $E_0 = 1$ and $\boldsymbol{\sigma^{(0)^2}}_{r} = 0$
\While {$E_2$ large}

\State Obtain $\boldsymbol{\sigma}^{(q)^2}_{r}$
\State Obtain $\boldsymbol\gamma^{(q)}_{s}$, using Algorithm 2

\State$E_2 = \boldsymbol{\sigma}^{(q)^2}_{r} - \boldsymbol{\sigma}^{(q-1)^2}_{r}$
\EndWhile
\State Calculate $R^{(0)}_\mathrm{sr}=\sum_{i=1}^{M} \log_2{\left(1+\bar{v}_{i}\bar{\gamma}_{{s}_i}\right)} $
\State Calculate $R^{(0)}_\mathrm{rd}=\sum_{j=1}^{K_\mathrm{t}} \log_2{\left(1+\sigma^2_{2_i}{\gamma}_{{r}_j}\right)} $
\If {$R^{(0)}_\mathrm{sr}<R^{(0)}_\mathrm{rd}$}
\State Define $U = P_{r}$, $L = 0$, $\bar{P}_{r}^{(1)}=\frac{P_{r}}{2}$ and $E_1 = 1$
\While {$E_1 \geq \frac{\varepsilon}{2}$}
\State Determine $\boldsymbol\gamma^{(l)}_{r}=[\tau^{(l)}_{r}-\frac{1}{\boldsymbol{\sigma}^2_2}]^{+}$, s.t. $\|\boldsymbol\gamma^{(l)}_{r}\|_1=\bar{P}_{r}^{(l)}$
\State Determine $\bar{\boldsymbol\gamma}^{(l)}_{r}=\boldsymbol\gamma^{(l)}_{r}(1:\min(M,K_\text{t}))$
\State Define $E_2 = 1$ and $\boldsymbol{\sigma^{(0)^2}}_{r} = 0$
\While {$E_2$ large}

\State Obtain $\boldsymbol{\sigma}^{(q)^2}_{r}$
\State Obtain $\boldsymbol\gamma^{(q)}_{s}$, using Algorithm 2

\State$E_2 = \boldsymbol{\sigma}^{(q)^2}_{r} - \boldsymbol{\sigma}^{(q-1)^2}_{r}$
\EndWhile
\State Calculate $R^{(l)}_\mathrm{sr}=\sum_{i=1}^{M} \log_2{\left(1+\bar{v}_{i}\bar{\gamma}_{{s}_i}\right)} $
\State Calculate $R^{(l)}_\mathrm{rd}=\sum_{j=1}^{K_\mathrm{t}} \log_2{\left(1+\sigma^2_{2_i}{\gamma}_{{r}_j}\right)} $
\If{$R^{(l)}_\mathrm{sr}>R^{(l)}_\mathrm{rd}$}
\State $U = \bar{P}_{r}$
\State $\bar{P}_{r} = \frac{U+L}{2}$
\ElsIf{$R^{(l)}_\mathrm{sr}<R^{(l)}_\mathrm{rd}$}
\State $L = \bar{P}_{r}$
\State $\bar{P}_{r} = \frac{U+L}{2}$
\EndIf
\State $E_1 = |U-L|$
\EndWhile 
\EndIf
\end{algorithmic}
\label{alg:MaterialCharacterization}
\end{algorithm}

\color{black}
\begin{algorithm}
\caption{The optimal $\boldsymbol{\gamma}_{s}$}
\begin{algorithmic}[1]
\State Find power allocation $\boldsymbol{P}$ using well-known regular water-filling algorithm 
\While {temp is large}
\State $\text{temp}=0$
\For{$i$}
\If{$P_i>\text{cap}_i$}
\State $P_i=\text{cap}_i$
\State $\text{temp}=\text{temp}+P_i-\text{cap}_i$
\EndIf
\EndFor
\State $\boldsymbol{P}=\boldsymbol{P}+\frac{\text{temp}}{\text{number of channels}}$
\EndWhile 
\end{algorithmic}
\label{alg:MaterialCharacterization}
\end{algorithm}
Now, we need to solve the optimization problem~\eqref{P:FDbA}. It can be readily shown that $R_{\text{rd}}^{\text{FD}}$ is monotonically increasing function of $P_r$. Furthermore, one can show that $R_{\text{sr}}^{\text{FD}}$ is an increasing function w.r.t. $P_s$ and decreasing function w.r.t. $T$ and $P_r$ (See Appendix III). Consequently, the worst-case RSI chooses a strategy to reduce the spectral efficiency, while the relay and the source cope with such strategy for improving the system robustness. That means, on one hand the RSI hurts the stronger eigendirections of the received signal space more than the weaker ones. However, on the other hand the source tries to cope with this strategy adaptively by smart eigen selection.
This process clearly makes optimization problem complicated at the source-relay side. Unlike the source-relay side, the resource allocation problem at the relay-receiver side is rather an easy task. Because in the relay-receiver side there is only one maximization and we can find the sum capacity simply by using the well-known water-filling algorithm. Observe that although finding each of $R_{\text{sr}}^{\text{FD}}$ and $R_{\text{rd}}^{\text{FD}}$ separately is a convex problem, the problem~\eqref{P:FDbA} as a whole is a non-convex one. Therefore in this paper, first we try to find each of $R_{\text{sr}}^{\text{FD}}$ and $R_{\text{rd}}^{\text{FD}}$ separately by doing a convex optimization problem, and then, assign the best values to $R_{\text{sr}}^{\text{FD}}$ and $R_{\text{rd}}^{\text{FD}}$ accordingly in order to reach the global optimum. As a result, the algorithm suggested by this paper gives an achievable rate and not the exact capacity. Notice that the optimum values for the transmission power on relay side may not sum to $\mathrm{P}_{r}$. the reason is that $R^{\mathrm{FD}}_\mathrm{sr}$ is a monotonically decreasing function of $\mathrm{P}_{r}$ and as we are interested in the $\min (R^{\mathrm{FD}}_\mathrm{sr},R^{\mathrm{FD}}_\mathrm{rd})$, in the case of $R^{\mathrm{FD}}_\mathrm{sr}<R^{\mathrm{FD}}_\mathrm{rd}$ we will have $\min (R^{\mathrm{FD}}_\mathrm{sr},R^{\mathrm{FD}}_\mathrm{rd}) = R^{\mathrm{FD}}_\mathrm{sr}$. Therefore it is in our interest not to use all the allowed power at the relay transmitter to increase $R^{\mathrm{FD}}_\mathrm{sr}$. Similarly, in the case of $R^{\mathrm{FD}}_\mathrm{sr}>R^{\mathrm{FD}}_\mathrm{rd}$ we have $\min (R^{\mathrm{FD}}_\mathrm{sr},R^{\mathrm{FD}}_\mathrm{rd}) = R^{\mathrm{FD}}_\mathrm{rd}$ which can be increased by increasing the total power usage of relay's transmitter.  In general, it is easy to check that in order for $\min (R^{\mathrm{FD}}_\mathrm{sr},R^{\mathrm{FD}}_\mathrm{rd})$ to be maximized, one must have $R^{\mathrm{FD}}_\mathrm{sr}=R^{\mathrm{FD}}_\mathrm{rd}$. However, in the case that we are already using the maximum allowed power at the relay's transmitter and we still have $R^{\mathrm{FD}}_\mathrm{sr}>R^{\mathrm{FD}}_\mathrm{rd}$, the problem cannot be further improved and the algorithm ends. The general idea of our algorithm is to solve $R^{\mathrm{FD}}_\mathrm{sr}$ and $R^{\mathrm{FD}}_\mathrm{rd}$ separately and then try to find $\mathrm{P}^{\star}_{r}$ in a way we have $R^{\mathrm{FD}}_\mathrm{sr}(\mathrm{P}^{\star}_{r})=R^{\mathrm{FD}}_\mathrm{rd}(\mathrm{P}^{\star}_{r})$. If it turns out that $\mathrm{P}^{\star}_{r}>\mathrm{P}^{}_{r}$, this means that the best value for relay input power is beyond the limit imposed by the constraints, so we consider $\mathrm{P}^{}_{r}$ as the optimal value for the relay's input power. In order to find $\mathrm{P}^{\star}_{r}$, first we define the function $g(\mathrm{P}^{}_\mathrm{})=R^{\mathrm{FD}}_\mathrm{sr}(\mathrm{P}^{}_\mathrm{})-R^{\mathrm{FD}}_\mathrm{rd}(\mathrm{P}^{}_\mathrm{})$ as a function of relay input power $\mathrm{P}^{}_\mathrm{}$, and then we find each $R^{\mathrm{FD}}_\mathrm{sr}(\mathrm{P}^{}_\mathrm{})$ and $R^{\mathrm{FD}}_\mathrm{rd}(\mathrm{P}^{}_\mathrm{})$ separately. To find $g(\mathrm{P})$ for every given $\mathrm{P}$, first we use water filling and find the best input power $\boldsymbol{\lambda}_{r}$ policy at the relay-receiver side. Then $R^{\mathrm{FD}}_\mathrm{rd}(\mathrm{P}^{}_{r})$ can be readily calculated. Next, we use $\boldsymbol{\lambda}_{r}$ to calculate $\boldsymbol{\lambda}_{s}$ and $\boldsymbol{\sigma}_{r}$ at the transmitter-ralay side. Finally, having all $\boldsymbol{\lambda}_{r}$, $\boldsymbol{\lambda}_{s}$ and $\boldsymbol{\sigma}_{r}$ we can find the value of  $R^{\mathrm{FD}}_\mathrm{sr}(\mathrm{P}^{}_{r})$. As discussed before, we are interested in finding the $\mathrm{P}$ such that $R^{\mathrm{FD}}_\mathrm{sr}(\mathrm{P})=R^{\mathrm{FD}}_\mathrm{rd}(\mathrm{P})$. Observe that, to find such $\mathrm{P}$ it is sufficient to find the zeros of $g$. Hence, in the main algorithm we first solve the problem by setting $\mathrm{P}^{}_\mathrm{}=\mathrm{P}^{}_{r}$ and then check whether we have 
$R^{\mathrm{FD}}_\mathrm{sr}<R^{\mathrm{FD}}_\mathrm{rd}$ or $R^{\mathrm{FD}}_\mathrm{sr}\geq R^{\mathrm{FD}}_\mathrm{rd}$. In case that $R^{\mathrm{FD}}_\mathrm{sr}\geq R^{\mathrm{FD}}_\mathrm{rd}$ happens, the algorithm ends since it suggests $\mathrm{P}^{\star}_{r}\geq\mathrm{P}^{}_{r}$, and by taking power constraints into account we conclude $\mathrm{P}^{\star}_{r}=\mathrm{P}^{}_{r}$. Otherwise, algorithm should keep going until it finds $\mathrm{P}^{\star}_{r}$. Note that when $R^{\mathrm{FD}}_\mathrm{sr}<R^{\mathrm{FD}}_\mathrm{rd}$ we have $g(\mathrm{P}^{}_{r})<0$. Also we know $g(0)=R^{\mathrm{FD}}_\mathrm{rd}(0)-R^{\mathrm{FD}}_\mathrm{sr}(0)=R^{\mathrm{FD}}_\mathrm{rd}(0)\geq0$. Therefore, for $\mathrm{P}\in [0 \ \ \mathrm{P}_{r}]$ we have $g(a)g(b)\leq 0$ and we can use the well-known Bisection method to find the zero of $g$ (Bisection can find the zero of a continuous function $g$ in the interval $[a \ \ b]$ if we have $g(a)g(b)\leq 0$). Also, as mentioned before $R^{\mathrm{FD}}_{\mathrm{sr}}$ is a monotonically decreasing function of $\mathrm{P}_{r}$ whereas $R^{\mathrm{FD}}_{\mathrm{rd}}$ is monotonically increasing function of $\mathrm{P}_{r}$ (See Appendix III). As a result, $g$ is a monotonically decreasing function of $\mathrm{P}^{}_{r}$ and therefore, it only has one zero in $[0 \ \ \mathrm{P}_{r}]$. The pseudo code for for finding the optimal singular and eigenvalues is provided in Algorithm 1.

Now we focus on how to find $R^\mathrm{FD}_\mathrm{sr}$. In order to find the sum rate for source-relay part, we assume that we are already given $\boldsymbol{\gamma}^\star_{r}$ which is the vector of relay input powers that maximizes the sum rate in relay-destination part. The next step is to do the minimization over $\boldsymbol{\sigma}_{r}$ and the maximization over $\boldsymbol{\gamma}_{s}$. One approach to solve this problem is to solve it iteratively. In this method, first one finds the optimal $\boldsymbol{\gamma}_{s}$ by solving the maximization part under the assumption that the optimal $\boldsymbol{\sigma}_{r}$ is given, and then, having the optimal $\boldsymbol{\gamma}_{s}$ the minimization problem can be solved efficiently. This process goes on until the convergence of $\boldsymbol{\gamma}_{s}$ and/or $\boldsymbol{\sigma}_{r}$. The maximization part is done using water-filling method. However, the extra conditions $\forall \ i \leq \min(M,K_{{r}}), \sigma_{1_i}^2\gamma_{{s}_{\rho{\left(i\right)}}} \geq \sigma_{1_{i+1}}^2\gamma_{{s}_{i+1}}$ should be taken into account. For instance, if the optimal value for $\gamma_{s_i}$ turns out to be equal to zero, we then should have $\gamma_{s_j}=0$ for all $j>i$ irrespective of their SNR. As it can be seen in Fig. \ref{fig:WaterLevel} these extra restrictions act like caps on top of the water and creates  multilevel water-filling which can be construed as a cave inside the water. Algorithm 2 provides the detail of multilevel water-filling. For the minimization part, Lagrangian multiplier is used. We have
\begin{align}
    L = \sum_{i=1}^{\min(M,K_{r})}& \!\log_2{\!\left(1+
\frac{\sigma_{1_i}^2\gamma_{{s}_{\rho{\left(i\right)}}}}{1+\gamma_{{r}_i}\sigma_{{r}_{\rho{\left(i\right)}}}^2}\right)} + \lambda \left( \sum_{i=0}^{N}\sigma^2_{r_i} - T\right)
\end{align}
Calculating $\frac{\partial L}{\partial \sigma^2_{r_i}}=0$ we arrive at
{\begin{align}
\sigma^2_{r_i} = \left[\frac{\pm\sqrt{\left(\sigma^2_{1_i}\gamma_{s_i}\right)^2+\frac{4\sigma^2_{1_i}\gamma_{s_i}\gamma{r_i}}{\lambda}}-\sigma^2_{1_i}\gamma_{s_i}-2}{2\gamma_{r_i}}\right]^+.
\end{align}}
As $\sigma^2_{r_i}$ is always non-negative, the only solution would be
n{\begin{align}
\sigma^2_{r_i} = \left[\frac{\sqrt{\left(\sigma^2_{1_i}\gamma_{s_i}\right)^2+\frac{4\sigma^2_{1_i}\gamma_{s_i}\gamma{r_i}}{\lambda}}-\sigma^2_{1_i}\gamma_{s_i}-2}{2\gamma_{r_i}}\right]^+,\label{eq:condition}
\end{align}}
where $\lambda$ is the water level. Similarly to the maximization case, there is also extra constraints $\gamma_{{r}_i}\sigma_{{s}_{\rho{\left(i\right)}}}^2 \geq \gamma_{{r}_{i+1}}\sigma_{{r}_{i+1}}^2$ that must be considered during the minimization process. However, it can be shown that if the constraints $\gamma_{r_{i}} \geq \gamma_{r_{i+1}}$ and $\sigma_{1_i}^2\gamma_{{s}_{\rho{\left(i\right)}}} \geq \sigma_{1_{i+1}}^2\gamma_{{s}_{i+1}}$ are already met, then constraint $\gamma_{{r}_i}\sigma_{{s}_{\rho{\left(i\right)}}}^2 \geq \gamma_{{r}_{i+1}}\sigma_{{r}_{i+1}}^2$ becomes redundant. The proof is given in Appendix IV.

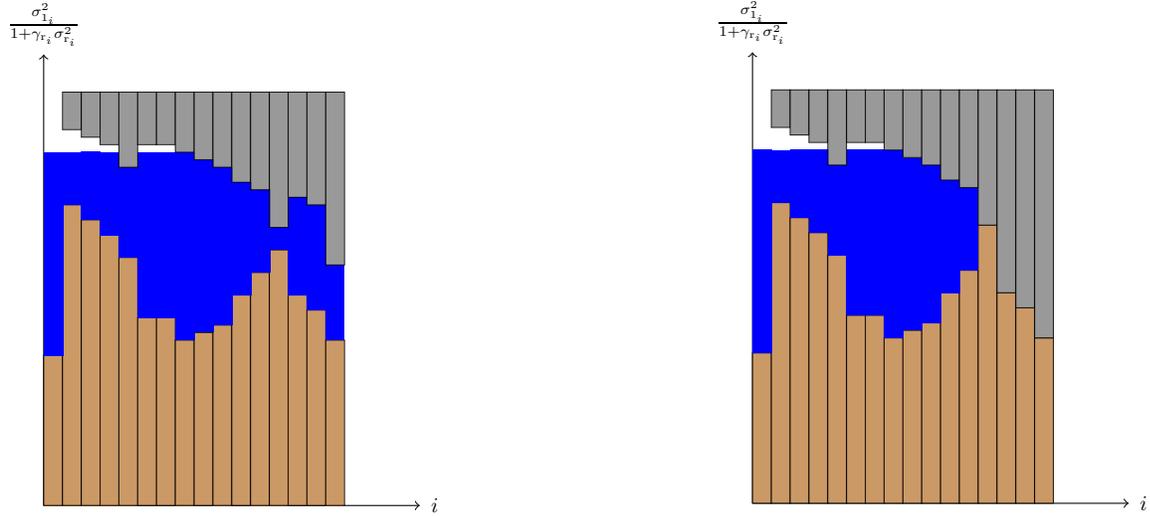
\begin{figure*}
\centering
\quad
\begin{minipage}{0.4\textwidth}
\subfigure[In this case there is no subchannel with optimum power equal to zero. Note that due to the power cap constraint~\eqref{P:FDb:ConsA}, water has not the same level for all subchannels.]{
\tikzset{every picture/.style={scale=1}, every node/.style={scale=0.8}}%
\begin{tikzpicture}
\draw[->] (0,0)--(5,0);
\draw[->] (0,0)--(0,6);
\node at (5.2,0){$i$};
\node at (0,6.4){$\frac{\sigma_{1_i}^2}{1+\gamma_{\mathrm{r}_i}\sigma_{\mathrm{r}_i}^2}$};
\draw[fill=brown,opacity=0.8] (0,0) rectangle (.25,2);
\draw[fill=brown,opacity=0.8] (.25,0) rectangle (.5,4);
\draw[fill=brown,opacity=0.8] (.5,0) rectangle (.75,3.8);
\draw[fill=brown,opacity=0.8] (.75,0) rectangle (1,3.6);
\draw[fill=brown,opacity=0.8] (1,0) rectangle (1.25,3.3);
\draw[fill=brown,opacity=0.8] (1.25,0) rectangle (1.5,2.5);
\draw[fill=brown,opacity=0.8] (1.5,0) rectangle (1.75,2.5);
\draw[fill=brown,opacity=0.8] (1.75,0) rectangle (2,2.2);
\draw[fill=brown,opacity=0.8] (2,0) rectangle (2.25,2.3);
\draw[fill=brown,opacity=0.8] (2.25,0) rectangle (2.5,2.4);
\draw[fill=brown,opacity=0.8] (2.5,0) rectangle (2.75,2.8);
\draw[fill=brown,opacity=0.8] (2.75,0) rectangle (3,3.1);
\draw[fill=brown,opacity=0.8] (3,0) rectangle (3.25,3.4);
\draw[fill=brown,opacity=0.8] (3.25,0) rectangle (3.5,2.8);
\draw[fill=brown,opacity=0.8] (3.5,0) rectangle (3.75,2.6);
\draw[fill=brown,opacity=0.8] (3.75,0) rectangle (4,2.2);
\fill[fill=blue] (0,2) rectangle (.26,4.7);
\fill[fill=blue] (.25,4) rectangle (.51,4.7);
\fill[fill=blue] (.5,3.8) rectangle (.76,4.7);
\fill[fill=blue] (.75,3.6) rectangle (1.01,4.7);
\fill[fill=blue] (1,3.3) rectangle (1.26,4.5);
\fill[fill=blue] (1.25,2.5) rectangle (1.51,4.7);
\fill[fill=blue] (1.5,2.5) rectangle (1.76,4.7);
\fill[fill=blue] (1.75,2.2) rectangle (2.01,4.7);
\fill[fill=blue] (2,2.3) rectangle (2.26,4.6);
\fill[fill=blue] (2.25,2.4) rectangle (2.51,4.5);
\fill[fill=blue] (2.5,2.8) rectangle (2.76,4.3);
\fill[fill=blue] (2.74,3.1) rectangle (3.01,4.2);
\fill[fill=blue] (3,3.4) rectangle (3.26,3.7);
\fill[fill=blue] (3.25,2.8) rectangle (3.51,4.1);
\fill[fill=blue] (3.5,2.6) rectangle (3.76,4);
\fill[fill=blue] (3.75,2.2) rectangle (4,3.2);

\draw[pattern=crosshatch,pattern color=gray!80!black][fill=gray,opacity=0.8] (.25,5) rectangle (.5,5.5);
\draw[pattern=crosshatch,pattern color=gray!80!black][fill=gray,opacity=0.8] (.5,4.9) rectangle (.75,5.5);
\draw[pattern=crosshatch,pattern color=gray!80!black][fill=gray,opacity=0.8] (.75,4.8) rectangle (1,5.5);
\draw[pattern=crosshatch,pattern color=gray!80!black][fill=gray,opacity=0.8] (1,4.5) rectangle (1.25,5.5);
\draw[pattern=crosshatch,pattern color=gray!80!black][fill=gray,opacity=0.8] (1.25,4.8) rectangle (1.5,5.5);
\draw[pattern=crosshatch,pattern color=gray!80!black][fill=gray,opacity=0.8] (1.5,4.8) rectangle (1.75,5.5);
\draw[pattern=crosshatch,pattern color=gray!80!black][fill=gray,opacity=0.8] (1.75,4.7) rectangle (2,5.5);
\draw[pattern=crosshatch,pattern color=gray!80!black][fill=gray,opacity=0.8] (2,4.6) rectangle (2.25,5.5);
\draw[pattern=crosshatch,pattern color=gray!80!black][fill=gray,opacity=0.8] (2.25,4.5) rectangle (2.5,5.5);
\draw[pattern=crosshatch,pattern color=gray!80!black][fill=gray,opacity=0.8] (2.5,4.3) rectangle (2.75,5.5);
\draw[pattern=crosshatch,pattern color=gray!80!black][fill=gray,opacity=0.8] (2.75,4.2) rectangle (3,5.5);
\draw[pattern=crosshatch,pattern color=gray!80!black][fill=gray,opacity=0.8] (3,3.7) rectangle (3.25,5.5);
\draw[pattern=crosshatch,pattern color=gray!80!black][fill=gray,opacity=0.8] (3.25,4.1) rectangle (3.5,5.5);
\draw[pattern=crosshatch,pattern color=gray!80!black][fill=gray,opacity=0.8] (3.5,4) rectangle (3.75,5.5);
\draw[pattern=crosshatch,pattern color=gray!80!black][fill=gray,opacity=0.8] (3.75,3.2) rectangle (4,5.5);
\end{tikzpicture}
}
\end{minipage}\quad\quad\quad\quad\quad\quad
\begin{minipage}{0.4\textwidth}
\subfigure[The case where there is one subchannel with optimum power equal to zero. Notice that in this case, due to the additional power cap constraint~\eqref{P:FDb:ConsA} all the remaining subchannels also get the zero power.]{
\tikzset{every picture/.style={scale=1}, every node/.style={scale=0.8}}%
\begin{tikzpicture}
\draw[->] (0,0)--(5,0);
\draw[->] (0,0)--(0,6);
\node at (5.2,0){$i$};
\node at (0,6.4){$\frac{\sigma_{1_i}^2}{1+\gamma_{\mathrm{r}_i}\sigma_{\mathrm{r}_i}^2}$};
\draw[fill=brown,opacity=0.8] (0,0) rectangle (.25,2);
\draw[fill=brown,opacity=0.8] (.25,0) rectangle (.5,4);
\draw[fill=brown,opacity=0.8] (.5,0) rectangle (.75,3.8);
\draw[fill=brown,opacity=0.8] (.75,0) rectangle (1,3.6);
\draw[fill=brown,opacity=0.8] (1,0) rectangle (1.25,3.3);
\draw[fill=brown,opacity=0.8] (1.25,0) rectangle (1.5,2.5);
\draw[fill=brown,opacity=0.8] (1.5,0) rectangle (1.75,2.5);
\draw[fill=brown,opacity=0.8] (1.75,0) rectangle (2,2.2);
\draw[fill=brown,opacity=0.8] (2,0) rectangle (2.25,2.3);
\draw[fill=brown,opacity=0.8] (2.25,0) rectangle (2.5,2.4);
\draw[fill=brown,opacity=0.8] (2.5,0) rectangle (2.75,2.8);
\draw[fill=brown,opacity=0.8] (2.75,0) rectangle (3,3.1);
\draw[fill=brown,opacity=0.8] (3,0) rectangle (3.25,3.7);
\draw[fill=brown,opacity=0.8] (3.25,0) rectangle (3.5,2.8);
\draw[fill=brown,opacity=0.8] (3.5,0) rectangle (3.75,2.6);
\draw[fill=brown,opacity=0.8] (3.75,0) rectangle (4,2.2);
\fill[fill=blue] (0,2) rectangle (.26,4.7);
\fill[fill=blue] (.25,4) rectangle (.51,4.7);
\fill[fill=blue] (.5,3.8) rectangle (.76,4.7);
\fill[fill=blue] (.75,3.6) rectangle (1.01,4.7);
\fill[fill=blue] (1,3.3) rectangle (1.26,4.5);
\fill[fill=blue] (1.25,2.5) rectangle (1.51,4.7);
\fill[fill=blue] (1.5,2.5) rectangle (1.76,4.7);
\fill[fill=blue] (1.75,2.2) rectangle (2.01,4.7);
\fill[fill=blue] (2,2.3) rectangle (2.26,4.6);
\fill[fill=blue] (2.25,2.4) rectangle (2.51,4.5);
\fill[fill=blue] (2.5,2.8) rectangle (2.76,4.3);
\fill[fill=blue](2.75,3.1) rectangle (3,4.2);

\draw[pattern=crosshatch,pattern color=gray!60!black][fill=gray,opacity=0.8](.25,5) rectangle (.5,5.5);
\draw[pattern=crosshatch,pattern color=gray!80!black][fill=gray,opacity=0.8] (.5,4.9) rectangle (.75,5.5);
\draw[pattern=crosshatch,pattern color=gray!80!black][fill=gray,opacity=0.8] (.75,4.8) rectangle (1,5.5);
\draw[pattern=crosshatch,pattern color=gray!80!black][fill=gray,opacity=0.8] (1,4.5) rectangle (1.25,5.5);
\draw[pattern=crosshatch,pattern color=gray!80!black][fill=gray,opacity=0.8] (1.25,4.8) rectangle (1.5,5.5);
\draw[pattern=crosshatch,pattern color=gray!80!black][fill=gray,opacity=0.8] (1.5,4.8) rectangle (1.75,5.5);
\draw[pattern=crosshatch,pattern color=gray!80!black][fill=gray,opacity=0.8] (1.75,4.7) rectangle (2,5.5);
\draw[pattern=crosshatch,pattern color=gray!80!black][fill=gray,opacity=0.8] (2,4.6) rectangle (2.25,5.5);
\draw[pattern=crosshatch,pattern color=gray!80!black][fill=gray,opacity=0.8] (2.25,4.5) rectangle (2.5,5.5);
\draw[pattern=crosshatch,pattern color=gray!80!black][fill=gray,opacity=0.8] (2.5,4.3) rectangle (2.75,5.5);
\draw[pattern=crosshatch,pattern color=gray!80!black][fill=gray,opacity=0.8] (2.75,4.2) rectangle (3,5.5);
\draw[pattern=crosshatch,pattern color=gray!80!black][fill=gray,opacity=0.8] (3,3.7) rectangle (3.25,5.5);
\draw[pattern=crosshatch,pattern color=gray!80!black][fill=gray,opacity=0.8] (3.25,2.8) rectangle (3.5,5.5);
\draw[pattern=crosshatch,pattern color=gray!80!black][fill=gray,opacity=0.8] (3.5,2.6) rectangle (3.75,5.5);
\draw[pattern=crosshatch,pattern color=gray!80!black][fill=gray,opacity=0.8] (3.75,2.2) rectangle (4,5.5);
\end{tikzpicture}
}
\end{minipage}
\caption{Multilevel water-filling}
\label{fig:WaterLevel}
\end{figure*}
\begin{figure*}
\centering
\begin{minipage}{0.5\textwidth}
\centering
\subfigure[$\{M,K_{r}+K_\mathrm{t},N\}=\{4,10,4\}$]{
\tikzset{every picture/.style={scale=.8}, every node/.style={scale=0.8}}%
%
%

\definecolor{mycolor1}{rgb}{0.00000,0.44700,0.74100}%
\definecolor{mycolor2}{rgb}{0.85000,0.32500,0.09800}%
\definecolor{mycolor3}{rgb}{0.92900,0.69400,0.12500}%
\definecolor{mycolor4}{rgb}{0.49400,0.18400,0.55600}%
\definecolor{mycolor5}{rgb}{0.46600,0.67400,0.18800}%
\definecolor{mycolor6}{rgb}{0.30100,0.74500,0.93300}%
\definecolor{mycolor7}{rgb}{0.63500,0.07800,0.18400}%
\begin{tikzpicture}
\begin{axis}[%
width=4.521in,
height=3.566in,
at={(0.758in,0.481in)},
scale only axis,
xmin=0,
xmax=60,
xlabel style={font=\color{white!15!black}},
xlabel near ticks,
xlabel={$\frac{T}{P}$},
ymin=3,
ymax=13,
ylabel style={font=\color{white!15!black}},
ylabel near ticks,
ylabel={$R_\mathrm{av}$ (bits/channel use)},
axis background/.style={fill=white},
xmajorgrids,
ymajorgrids,
legend style={at={(axis cs: 60,13)}, anchor=north east,draw=black,fill=white, fill opacity=.8,legend cell align=left}
]
\addplot [color=mycolor1, mark=+, mark options={solid, mycolor1}]
  table[row sep=crcr]{%
0	10.0132066674453\\
3	7.04082801139137\\
6	6.1654012429215\\
9	5.64903112438047\\
12	5.28902178187085\\
15	5.01579611800772\\
18	4.79793569048092\\
21	4.61871236239081\\
24	4.46717281167577\\
27	4.33687706025406\\
30	4.22321618534349\\
33	4.12291286377084\\
36	4.03386458815778\\
39	3.95296533749603\\
42	3.88068920746087\\
45	3.81439871455062\\
48	3.7544434109944\\
51	3.6980729048974\\
54	3.64675795929692\\
57	3.59937575669153\\
60	3.55503215518893\\
};
\addlegendentry{$\text{FD=}{\text{4,3,7,4}}$}

\addplot [color=mycolor2, dashed, mark=+, mark options={solid, mycolor2}]
   table[row sep=crcr]{%
0	6.00387678867345\\
3	6.00387678867345\\
6	6.00387678867345\\
9	6.00387678867345\\
12	6.00387678867345\\
15	6.00387678867345\\
18	6.00387678867345\\
21	6.00387678867345\\
24	6.00387678867345\\
27	6.00387678867345\\
30	6.00387678867345\\
33	6.00387678867345\\
36	6.00387678867345\\
39	6.00387678867345\\
42	6.00387678867345\\
45	6.00387678867345\\
48	6.00387678867345\\
51	6.00387678867345\\
54	6.00387678867345\\
57	6.00387678867345\\
60	6.00387678867345\\
};
\addlegendentry{$\text{HD=}{\text{4,3,7,4}}$}

\addplot [color=mycolor3, mark=o, mark options={solid, mycolor3}]
  table[row sep=crcr]{%
0	11.4993255963629\\
3	7.7395785898021\\
6	6.79592903828924\\
9	6.25328510393818\\
12	5.88026726335178\\
15	5.59985124042908\\
18	5.37742097629068\\
21	5.19537081762894\\
24	5.04248546314715\\
27	4.91156263075019\\
30	4.79753125595011\\
33	4.69739678881863\\
36	4.60838009947396\\
39	4.52837566850092\\
42	4.45611185815136\\
45	4.3904419608254\\
48	4.33057293738069\\
51	4.27561271116077\\
54	4.22518264470804\\
57	4.17841701941906\\
60	4.13470442796042\\
};
\addlegendentry{$\text{FD=}{\text{4,4,6,4}}$}

\addplot [color=mycolor4, dashed, mark=o, mark options={solid, mycolor4}]
  table[row sep=crcr]{%
0	6.30105258604983\\
3	6.30105258604983\\
6	6.30105258604983\\
9	6.30105258604983\\
12	6.30105258604983\\
15	6.30105258604983\\
18	6.30105258604983\\
21	6.30105258604983\\
24	6.30105258604983\\
27	6.30105258604983\\
30	6.30105258604983\\
33	6.30105258604983\\
36	6.30105258604983\\
39	6.30105258604983\\
42	6.30105258604983\\
45	6.30105258604983\\
48	6.30105258604983\\
51	6.30105258604983\\
54	6.30105258604983\\
57	6.30105258604983\\
60	6.30105258604983\\
};
\addlegendentry{$\text{HD=}{\text{4,4,6,4}}$}

\addplot [color=mycolor5, mark=asterisk, mark options={solid, mycolor5}]
  table[row sep=crcr]{%
0	12.2068913276888\\
3	8.30927811834808\\
6	7.37664452595049\\
9	6.84175417166565\\
12	6.47337729829713\\
15	6.19610476088662\\
18	5.97665050383111\\
21	5.79624479122752\\
24	5.64435375471356\\
27	5.51412934445539\\
30	5.40062799771128\\
33	5.30055270991646\\
36	5.21148200470566\\
39	5.13082020193476\\
42	5.05908689969377\\
45	4.99386992817086\\
48	4.93402389539288\\
51	4.8787135109579\\
54	4.82741700509176\\
57	4.78010305883844\\
60	4.73547708244634\\
};
\addlegendentry{$\text{FD=}{\text{4,5,5,4}}$}

\addplot [color=mycolor6, dashed, mark=asterisk, mark options={solid, mycolor6}]
  table[row sep=crcr]{%
0	6.41041779331438\\
3	6.41041779331438\\
6	6.41041779331438\\
9	6.41041779331438\\
12	6.41041779331438\\
15	6.41041779331438\\
18	6.41041779331438\\
21	6.41041779331438\\
24	6.41041779331438\\
27	6.41041779331438\\
30	6.41041779331438\\
33	6.41041779331438\\
36	6.41041779331438\\
39	6.41041779331438\\
42	6.41041779331438\\
45	6.41041779331438\\
48	6.41041779331438\\
51	6.41041779331438\\
54	6.41041779331438\\
57	6.41041779331438\\
60	6.41041779331438\\
};
\addlegendentry{$\text{HD=}{\text{4,5,5,4}}$}

\addplot [color=mycolor7, mark=x, mark options={solid, mycolor7}]
  table[row sep=crcr]{%
0	11.4961867674722\\
3	8.58294363923564\\
6	7.74659979088856\\
9	7.27767823514232\\
12	6.9558530387805\\
15	6.71391199922265\\
18	6.52192781946218\\
21	6.36354830552742\\
24	6.23014826129296\\
27	6.11564190628422\\
30	6.01572760738859\\
33	5.92709681546741\\
36	5.84757287955966\\
39	5.77632144110019\\
42	5.71165459888111\\
45	5.65321876701047\\
48	5.59892219896436\\
51	5.54820353226281\\
54	5.50176828719364\\
57	5.45717937505284\\
60	5.41568092394417\\
};
\addlegendentry{$\text{FD=}{\text{4,6,4,4}}$}

\addplot [color=mycolor1, dashed, mark=x, mark options={solid, mycolor1}]
  table[row sep=crcr]{%
0	6.30422022974078\\
3	6.30422022974078\\
6	6.30422022974078\\
9	6.30422022974078\\
12	6.30422022974078\\
15	6.30422022974078\\
18	6.30422022974078\\
21	6.30422022974078\\
24	6.30422022974078\\
27	6.30422022974078\\
30	6.30422022974078\\
33	6.30422022974078\\
36	6.30422022974078\\
39	6.30422022974078\\
42	6.30422022974078\\
45	6.30422022974078\\
48	6.30422022974078\\
51	6.30422022974078\\
54	6.30422022974078\\
57	6.30422022974078\\
60	6.30422022974078\\
};
\addlegendentry{$\text{HD=}{\text{4,6,4,4}}$}

\addplot [color=mycolor2, mark=square, mark options={solid, mycolor2}]
  table[row sep=crcr]{%
0	9.93954145608212\\
3	8.46501364468555\\
6	7.65717126937559\\
9	7.22487394194823\\
12	6.94129369473856\\
15	6.73293136082532\\
18	6.56925364322894\\
21	6.4346665877619\\
24	6.32179438349485\\
27	6.22570709617225\\
30	6.14229123664884\\
33	6.06956016456162\\
36	6.00405164551726\\
39	5.94777588751717\\
42	5.89631612865323\\
45	5.85110797455816\\
48	5.81120441515374\\
51	5.77267135491322\\
54	5.7409776465364\\
57	5.70971962116244\\
60	5.68147002202778\\
};
\addlegendentry{$\text{FD=}{\text{4,7,3,4}}$}

\addplot [color=mycolor3, dashed, mark=square, mark options={solid, mycolor3}]
  table[row sep=crcr]{%
0	5.97231140653149\\
3	5.97230978213356\\
6	5.97230971623485\\
9	5.97230984937951\\
12	5.97230946711029\\
15	5.97230956685565\\
18	5.972309369269\\
21	5.97230943596867\\
24	5.97230930486099\\
27	5.97230931699152\\
30	5.97230927041861\\
33	5.97230924162419\\
36	5.97230925501972\\
39	5.97230919493087\\
42	5.97230921694348\\
45	5.9723091821248\\
48	5.97230917572501\\
51	5.97230917701791\\
54	5.97230914970978\\
57	5.97230916197635\\
60	5.97230913445439\\
};
\addlegendentry{$\text{HD=}{\text{4,7,3,4}}$}
  \coordinate (spypoint) at (axis cs:1.6,7.1);
  \coordinate (magnifyglass) at (axis cs:7.5,9.2);
  \coordinate (spypoint1) at (axis cs:18,5.5);
  \coordinate (magnifyglass1) at (axis cs:22.5,9.2);
  \coordinate (spypoint2) at (axis cs:25.2,5.3);
  \coordinate (magnifyglass2) at (axis cs:28.8,6.7);
\end{axis}

\end{tikzpicture}%
\label{fig:HdVsFdE}
}
\end{minipage}%
\begin{minipage}{0.5\textwidth}
\centering
\subfigure[$\{M,K_{r}+K_\mathrm{t},N\}=\{10,24,10\}$]{
\tikzset{every picture/.style={scale=.8}, every node/.style={scale=0.8}}%
%
%
\definecolor{mycolor1}{rgb}{0.00000,0.44700,0.74100}%
\definecolor{mycolor2}{rgb}{0.85000,0.32500,0.09800}%
\definecolor{mycolor3}{rgb}{0.92900,0.69400,0.12500}%
\definecolor{mycolor4}{rgb}{0.49400,0.18400,0.55600}%
\definecolor{mycolor5}{rgb}{0.46600,0.67400,0.18800}%
\definecolor{mycolor6}{rgb}{0.30100,0.74500,0.93300}%
\definecolor{mycolor7}{rgb}{0.63500,0.07800,0.18400}%
\begin{tikzpicture}

\begin{axis}[%
width=4.521in,
height=3.566in,
at={(0.758in,0.481in)},
scale only axis,
xmin=0,
xmax=60,
xlabel style={font=\color{white!15!black}},
xlabel near ticks,
xlabel={$\frac{T}{P}$},
ymin=12,
ymax=32,
ylabel style={font=\color{white!15!black}},
ylabel near ticks,
ylabel={$R_\mathrm{av}$ (bits/channel use)},
axis background/.style={fill=white},
xmajorgrids,
ymajorgrids,
legend style={at={(axis cs: 60,32)}, anchor=north east,draw=black,fill=white, fill opacity=0.8,legend cell align=left}
]
\addplot [color=mycolor1, mark=+, mark options={solid, mycolor1}]
  table[row sep=crcr]{%
0	22.2379831813425\\
3	22.23797283972\\
6	22.2342391021216\\
9	22.1184368169953\\
12	21.7501415914825\\
15	21.242222022122\\
18	20.746016654095\\
21	20.3068547475085\\
24	19.9241160653056\\
27	19.5906674199889\\
30	19.2955723136925\\
33	19.033346686762\\
36	18.7976034385335\\
39	18.5833056044367\\
42	18.3868707311458\\
45	18.2056988415776\\
48	18.0371292674408\\
51	17.879372684692\\
54	17.7318767007492\\
57	17.5936642186954\\
60	17.4631869556317\\
};
\addlegendentry{$\text{FD=}{\text{10,18,6,10}}$}

\addplot [color=mycolor2, dashed, mark=+, mark options={solid, mycolor2}]
   table[row sep=crcr]{%
0	14.063407462385\\
3	14.063403315406\\
6	14.0634023526012\\
9	14.0634022045181\\
12	14.0634021630196\\
15	14.0634021548112\\
18	14.0634021864253\\
21	14.0634021478207\\
24	14.0634019752863\\
27	14.0634018567046\\
30	14.0634018176362\\
33	14.063401831277\\
36	14.0634018370433\\
39	14.0634018033364\\
42	14.063401749606\\
45	14.0634017294446\\
48	14.0634017262263\\
51	14.0634017245112\\
54	14.0634017099448\\
57	14.063401685183\\
60	14.063401670426\\
};
\addlegendentry{$\text{HD=}{\text{10,18,6,10}}$}

\addplot [color=mycolor3, mark=o, mark options={solid, mycolor3}]
  table[row sep=crcr]{%
0	28.870246711131\\
3	27.2538178798514\\
6	25.1508574301153\\
9	23.8411680259688\\
12	22.9053052816847\\
15	22.1827360530484\\
18	21.5954392975779\\
21	21.1021269821487\\
24	20.6784430152478\\
27	20.3078328604544\\
30	19.9791838169882\\
33	19.6831838361889\\
36	19.4152378968822\\
39	19.1704278773893\\
42	18.9450854463147\\
45	18.7366628976672\\
48	18.5430350149395\\
51	18.3621404422607\\
54	18.1924955718451\\
57	18.0326300407024\\
60	17.8811891482315\\
};
\addlegendentry{$\text{FD=}{\text{10,14,10,10}}$}

\addplot [color=mycolor4, dashed, mark=o, mark options={solid, mycolor4}]
  table[row sep=crcr]{%
0	15.6093574888593\\
3	15.6093574888593\\
6	15.6093574888593\\
9	15.6093574888593\\
12	15.6093574888593\\
15	15.6093574888593\\
18	15.6093574888593\\
21	15.6093574888593\\
24	15.6093574888593\\
27	15.6093574888593\\
30	15.6093574888593\\
33	15.6093574888593\\
36	15.6093574888593\\
39	15.6093574888593\\
42	15.6093574888593\\
45	15.6093574888593\\
48	15.6093574888593\\
51	15.6093574888593\\
54	15.6093574888593\\
57	15.6093574888593\\
60	15.6093574888593\\
};
\addlegendentry{$\text{HD=}{\text{10,14,10,10}}$}

\addplot [color=mycolor5, mark=asterisk, mark options={solid, mycolor5}]
  table[row sep=crcr]{%
0	30.9353767530573\\
3	26.3177498347916\\
6	24.2066173311158\\
9	22.856935220317\\
12	21.8695883461739\\
15	21.0945930913527\\
18	20.4600393482583\\
21	19.924862246167\\
24	19.4632622923921\\
27	19.0588058840285\\
30	18.6995908219613\\
33	18.3771119889963\\
36	18.0849472258941\\
39	17.8182059991727\\
42	17.5737131813104\\
45	17.3482458395814\\
48	17.1388181678288\\
51	16.9436053853072\\
54	16.7612252013682\\
57	16.5902092414658\\
60	16.4292380129159\\
};
\addlegendentry{$\text{FD=}{\text{10,12,12,10}}$}

\addplot [color=mycolor6, dashed, mark=asterisk, mark options={solid, mycolor6}]
  table[row sep=crcr]{%
0	15.7771979508363\\
3	15.7771979508363\\
6	15.7771979508363\\
9	15.7771979508363\\
12	15.7771979508363\\
15	15.7771979508363\\
18	15.7771979508363\\
21	15.7771979508363\\
24	15.7771979508363\\
27	15.7771979508363\\
30	15.7771979508363\\
33	15.7771979508363\\
36	15.7771979508363\\
39	15.7771979508363\\
42	15.7771979508363\\
45	15.7771979508363\\
48	15.7771979508363\\
51	15.7771979508363\\
54	15.7771979508363\\
57	15.7771979508363\\
60	15.7771979508363\\
};
\addlegendentry{$\text{HD=}{\text{10,12,12,10}}$}

\addplot [color=mycolor7, mark=x, mark options={solid, mycolor7}]
  table[row sep=crcr]{%
  0	28.8059999382025\\
3	24.7451992481848\\
6	22.8273444702011\\
9	21.5564651000405\\
12	20.6087786839763\\
15	19.8563853585685\\
18	19.2348594560757\\
21	18.7066834537158\\
24	18.2490333223378\\
27	17.8466383565836\\
30	17.4878278406791\\
33	17.164819515324\\
36	16.8712239773988\\
39	16.6027776175203\\
42	16.3560753055515\\
45	16.1276169333143\\
48	15.9153256789564\\
51	15.7173378567254\\
54	15.5318741976491\\
57	15.3574488220345\\
60	15.1932387688318\\
};
\addlegendentry{$\text{FD=}{\text{10,10,14,10}}$}

\addplot [color=mycolor1, dashed, mark=x, mark options={solid, mycolor1}]
  table[row sep=crcr]{%
0	15.5895983142936\\
3	15.5895983142936\\
6	15.5895983142936\\
9	15.5895983142936\\
12	15.5895983142936\\
15	15.5895983142936\\
18	15.5895983142936\\
21	15.5895983142936\\
24	15.5895983142936\\
27	15.5895983142936\\
30	15.5895983142936\\
33	15.5895983142936\\
36	15.5895983142936\\
39	15.5895983142936\\
42	15.5895983142936\\
45	15.5895983142936\\
48	15.5895983142936\\
51	15.5895983142936\\
54	15.5895983142936\\
57	15.5895983142936\\
60	15.5895983142936\\
};
\addlegendentry{$\text{HD=}{\text{10,10,14,10}}$}

\addplot [color=mycolor2, mark=square, mark options={solid, mycolor2}]
  table[row sep=crcr]{%
0	22.2308768912884\\
3	20.0185731633967\\
6	18.7098627807661\\
9	17.7729721842391\\
12	17.0431823906632\\
15	16.4463824174531\\
18	15.9422737456717\\
21	15.5067626116879\\
24	15.1233300359948\\
27	14.7811976912979\\
30	14.472735054137\\
33	14.1918739634842\\
36	13.9344530787368\\
39	13.6972262668899\\
42	13.4772719979522\\
45	13.2722811523645\\
48	13.0806401292542\\
51	12.9004621061488\\
54	12.7306011301035\\
57	12.5705098505658\\
60	12.418912841273\\
};
\addlegendentry{$\text{FD=}{\text{10,6,18,10}}$}

\addplot [color=mycolor3, dashed, mark=square, mark options={solid, mycolor3}]
  table[row sep=crcr]{%
0	14.0597126722874\\
3	14.0597126722874\\
6	14.0597126722874\\
9	14.0597126722874\\
12	14.0597126722874\\
15	14.0597126722874\\
18	14.0597126722874\\
21	14.0597126722874\\
24	14.0597126722874\\
27	14.0597126722874\\
30	14.0597126722874\\
33	14.0597126722874\\
36	14.0597126722874\\
39	14.0597126722874\\
42	14.0597126722874\\
45	14.0597126722874\\
48	14.0597126722874\\
51	14.0597126722874\\
54	14.0597126722874\\
57	14.0597126722874\\
60	14.0597126722874\\
};
\addlegendentry{$\text{HD=}{\text{10,6,18,10}}$}

\end{axis}

\end{tikzpicture}%
\label{fig:HdVsFdE}
}
\end{minipage}
\caption{The transmit power budget at the source and the relay are assumed to be equal, i.e., $P_{s}=P_\mathrm
r=P=5$.}
\label{fig:HdVsFdBD}
\end{figure*}
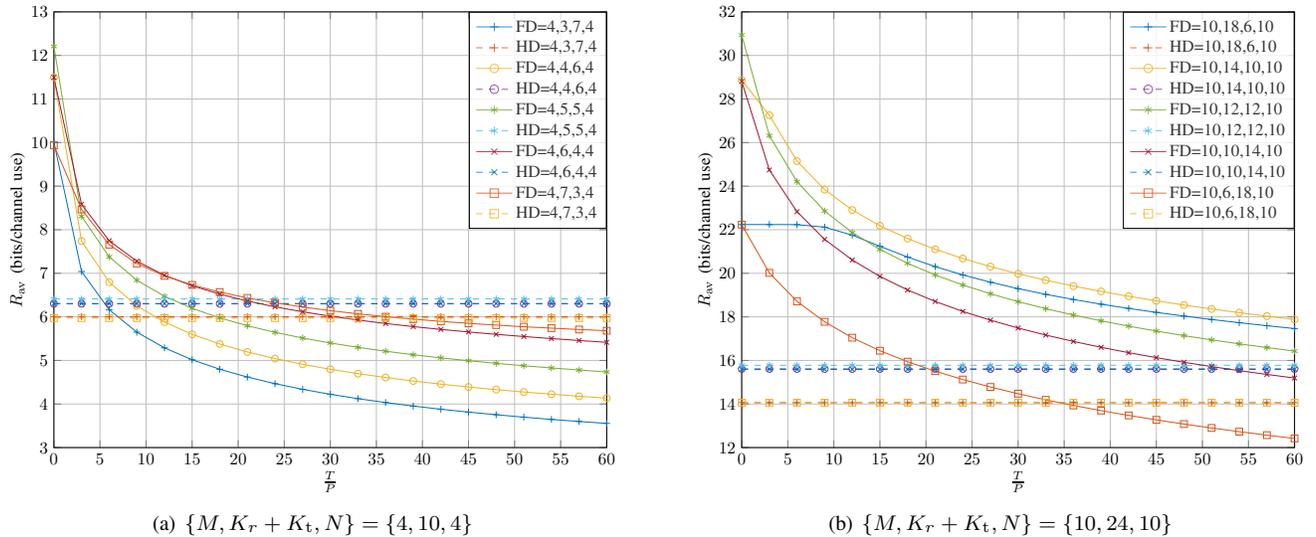
\section{Numerical Results}
We assume equal transmit power budgets at the source and at the relay are equal and we provide the simulations for case $P=P_{s}=P_{r}=5$ . Moreover, the receiver AWGN variance is assumed to be unity. We investigate the performance of full-duplex relaying with RSI channel uncertainty bound $T$, i.e.,  $\mathrm{Tr}(\bar{\mathbf{H}}_{r}\bar{\mathbf{H}}^H_{r})\leq T$. We consider the column vectors of the source-relay and the relay-destination channel matrices to be from zero-mean Gaussian distribution with identity covariance matrices. That means, by representing the $i$-th column of $\mathbf{H}_1$ and $j$-th column of $\mathbf{H}_2$ as $\mathbf{h}_{1i}$ and $\mathbf{h}_{2j}$, respectively, we assume $\mathbf{h}_{1i}\sim\mathcal{CN}(\mathbf{0},\mathbf{I})$ and  $\mathbf{h}_{2j}\sim\mathcal{CN}(\mathbf{0},\mathbf{I})$. We perform Monte-Carlo simulations with $L=10^4$ realizations from random channels and noise vectors. Hence, the average worst-case throughput rate is defined as the average of worst-case rates for $L$ randomizations, i.e., 
$
R_{\mathrm{av}}=\frac{1}{L}\sum_{l=1}^{L}R_l.
$
Notice that, for each set of realizations, i.e., $\{\mathbf{H}_1,\mathbf{H}_2,\mathbf{n}_{r},\mathbf{n}_\mathrm{d}\}$, we solve the robust transceiver design as is elaborated in Algorithm 1. We run two sets of simulations as described in two following subsections.

\subsection{Antenna Array Increment}
We consider two cases, where the source, relay and destination are equipped with (a)- small antenna array and (b)- large antenna arrays. For these cases, we have
\begin{enumerate}[(a)-]
\item $M=4,K_{r}+K_\mathrm{t}=10,N=4$
\item $M=10,K_{r}+K_\mathrm{t}=24,N=10$
\end{enumerate}
These cases are considered to highlight the performance of full-duplex DF relaying as a function of number of antennas with the worst-case RSI. Interestingly, as the number of antennas at the source, relay and destination increase, full-duplex relaying achieves a higher throughput rate even with strong RSI. Furthermore, notice that the worst-case RSI casts strong interference on the strong streams from the source to the destination. With very low RSI power $T\rightarrow 0$, full-duplex almost doubles the throughput rate compared to the half-duplex counterpart. This can be seen in~\figurename{~\ref{fig:HdVsFdBD}}, where the curves cross the vertical axis. However, as $T$ increases, the efficiency of full-duplex operation drops.


\begin{figure*}
\centering
\begin{minipage}{.5\textwidth}
\centering
\subfigure[$\{M,K_\mathrm{t}+K_{r},N\}=\{4,8,4\}$]{
\tikzset{every picture/.style={scale=.8}, every node/.style={scale=0.8}}%
%
%
\definecolor{mycolor1}{rgb}{0.00000,0.44700,0.74100}%
\definecolor{mycolor2}{rgb}{0.85000,0.32500,0.09800}%
\definecolor{mycolor3}{rgb}{0.92900,0.69400,0.12500}%
\definecolor{mycolor4}{rgb}{0.49400,0.18400,0.55600}%
\definecolor{mycolor5}{rgb}{0.46600,0.67400,0.18800}%
\definecolor{mycolor6}{rgb}{0.30100,0.74500,0.93300}%
\begin{tikzpicture}

\begin{axis}[%
width=4.521in,
height=3.566in,
at={(0.758in,0.481in)},
scale only axis,
xmin=1,
xmax=7,
xlabel near ticks,
xlabel={$K_\mathrm{r}$},
ymin=2,
ymax=11,
ylabel near ticks,
ylabel={$R_\mathrm{av}$ (bits/channel use)},
axis background/.style={fill=white},
xmajorgrids,
ymajorgrids,
legend style={at={(axis cs: 7,11)}, anchor=north east,draw=black,fill=white, fill opacity=0.8,legend cell align=left}
]
\addplot [color=mycolor1, mark=square]
  table[row sep=crcr]{%
1	5.19276598801662\\
2	8.11920085133601\\
3	10.027191368564\\
4	10.8515310803683\\
5	9.93848022937452\\
6	8.042436251107\\
7	5.17323523709587\\
};
\addlegendentry{$T=0$}

\addplot [color=mycolor2, mark=+]
  table[row sep=crcr]{%
1	5.17456846237112\\
2	7.91002338412908\\
3	8.70041730192537\\
4	8.44807183953743\\
5	7.76298916526416\\
6	6.5559413410457\\
7	4.40267849337322\\
};
\addlegendentry{$T=5$}

\addplot [color=mycolor3, mark=o]
  table[row sep=crcr]{%
1	5.18233002042795\\
2	7.68005974863779\\
3	7.86960912266989\\
4	7.6299803058935\\
5	6.99447042609095\\
6	5.91885804131884\\
7	4.01999728521179\\
};
\addlegendentry{$T=10$}

\addplot [color=mycolor4, mark=x]
  table[row sep=crcr]{%
1	5.18606979631162\\
2	7.11147904609353\\
3	7.12034585932373\\
4	6.74458850504673\\
5	6.18596301127858\\
6	5.21103269432574\\
7	3.58056935580673\\
};
\addlegendentry{$T=20$}

\addplot [color=mycolor5, mark=diamond]
  table[row sep=crcr]{%
1	5.16649202904227\\
2	6.48664294790432\\
3	6.34659420434169\\
4	5.95948696883438\\
5	5.32331930546157\\
6	4.4781601507804\\
7	3.0444713328328\\
};
\addlegendentry{$T=40$}

\addplot [color=mycolor6, mark=asterisk]
  table[row sep=crcr]{%
1	5.19069523284768\\
2	6.03360938110137\\
3	5.77634142170784\\
4	5.28261490109712\\
5	4.59012563420394\\
6	3.75988412987369\\
7	2.56564375656247\\
};
\addlegendentry{$T=80$}

\end{axis}

\begin{axis}[%
width=5.833in,
height=4.375in,
at={(0in,0in)},
scale only axis,
xmin=0,
xmax=1,
ymin=0,
ymax=1,
axis line style={draw=none},
ticks=none,
axis x line*=bottom,
axis y line*=left,
legend style={legend cell align=left, align=left, draw=white!15!black}
]
\end{axis}
\end{tikzpicture}%
\label{fig:CwrtT}
}
\end{minipage}%
\begin{minipage}{.5\textwidth}
\centering
\subfigure[$\{M,K_\mathrm{t}+K_{r},N\}=\{8,8,8\}$]{
\tikzset{every picture/.style={scale=.8}, every node/.style={scale=0.8}}%
%
%
\definecolor{mycolor1}{rgb}{0.00000,0.44700,0.74100}%
\definecolor{mycolor2}{rgb}{0.85000,0.32500,0.09800}%
\definecolor{mycolor3}{rgb}{0.92900,0.69400,0.12500}%
\definecolor{mycolor4}{rgb}{0.49400,0.18400,0.55600}%
\definecolor{mycolor5}{rgb}{0.46600,0.67400,0.18800}%
\definecolor{mycolor6}{rgb}{0.30100,0.74500,0.93300}%
\begin{tikzpicture}

\begin{axis}[%
width=4.521in,
height=3.566in,
at={(0.758in,0.481in)},
scale only axis,
xmin=1,
xmax=7,
xlabel near ticks,
xlabel={$K_\mathrm{r}$},
ymin=2,
ymax=16,
ylabel near ticks,
ylabel={$R_\mathrm{av}$ (bits/channel use)},
axis background/.style={fill=white},
xmajorgrids,
ymajorgrids,
legend style={at={(axis cs: 7,16)}, anchor=north east,draw=black,fill=white, fill opacity=0.8,legend cell align=left}
]
\addplot [color=mycolor1, mark=square]
  table[row sep=crcr]{%
1	6.25514647331752\\
2	10.3662453539247\\
3	13.5334750357343\\
4	15.4999872355007\\
5	13.5342850648697\\
6	10.3495384944814\\
7	6.24955898141535\\
};
\addlegendentry{$T=0$}

\addplot [color=mycolor2, mark=asterisk]
  table[row sep=crcr]{%
1	6.27283606544687\\
2	10.3591616140734\\
3	12.9429505789372\\
4	12.9106518932493\\
5	11.4893258490764\\
6	9.13312761863789\\
7	5.68512178872962\\
};
\addlegendentry{$T=5$}

\addplot [color=mycolor3, mark=diamond]
  table[row sep=crcr]{%
1	6.26845092007536\\
2	10.3443158801484\\
3	12.0271630031676\\
4	11.7657338750834\\
5	10.5530918898085\\
6	8.47138420160092\\
7	5.34962459329345\\
};
\addlegendentry{$T=10$}

\addplot [color=mycolor4, mark=x]
  table[row sep=crcr]{%
1	6.24113528053587\\
2	10.3099854735025\\
3	10.8976125021898\\
4	10.5335497110413\\
5	9.46609406910993\\
6	7.72183763043796\\
7	4.87863863026287\\
};
\addlegendentry{$T=20$}

\addplot [color=mycolor5, mark=o]
  table[row sep=crcr]{%
1	6.2591599039995\\
2	10.2304180918633\\
3	9.80037203548379\\
4	9.30570495075558\\
5	8.28729008032914\\
6	6.78554746592443\\
7	4.35381183265742\\
};
\addlegendentry{$T=40$}

\addplot [color=mycolor6, mark=+]
  table[row sep=crcr]{%
1	6.26804388229854\\
2	10.1249027184062\\
3	8.75309744669561\\
4	8.14578362359868\\
5	7.09126033208113\\
6	5.8579040583938\\
7	3.76909992743912\\
};
\addlegendentry{$T=80$}
\end{axis}

\begin{axis}[%
width=5.833in,
height=4.375in,
at={(0in,0in)},
scale only axis,
xmin=0,
xmax=1,
ymin=0,
ymax=1,
axis line style={draw=none},
ticks=none,
axis x line*=bottom,
axis y line*=left,
legend style={legend cell align=left, align=left, draw=white!15!black}
]
\end{axis}
\end{tikzpicture}%
\label{fig:CwrtT1}
}
\end{minipage}
\caption{The transmit power budget at the source and the relay are assumed to be equal, i.e., $P_{s}=P_\mathrm
r=P=5$.}
\label{fig:HdVsFdBD}
\end{figure*}
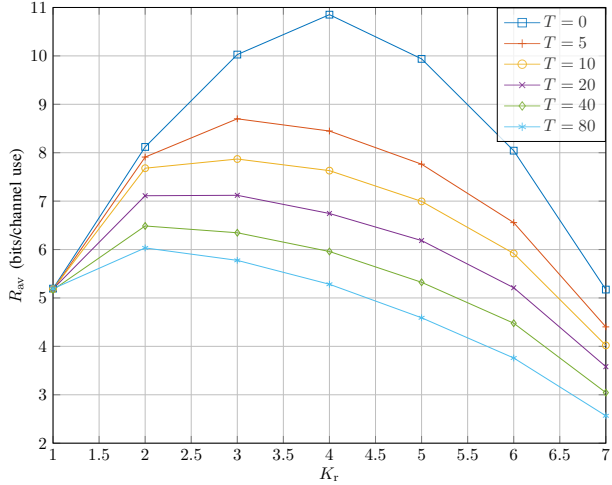
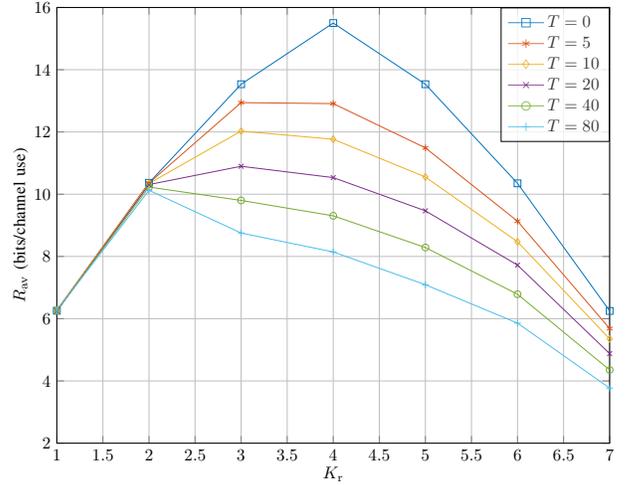
\subsection{Relay Tx/Rx Antenna allocation}
Let the relay have $K_\mathrm{t}+K_{r}=8$ in total. Furthermore, suppose two cases where the number of antenna at the source and relay are $\{M,N\}=\{4,4\}$ and $\{M,N\}=\{8,8\}$ respectively. The question is, from $8$ antennas at the relay, how many should be used for reception for the robust design?. To answer this question, we consider the following scenarios
\begin{enumerate}[(a):]
\item $\{K_\mathrm{t},K_{r}\}=\{1,7\}\Rightarrow
\mathrm{DoF}_\mathrm{sr}=3,\ \mathrm{DoF}_\mathrm{rd}=1$
\item $\{K_\mathrm{t},K_{r}\}=\{2,6\}\Rightarrow
\mathrm{DoF}_\mathrm{sr}=4,\ \mathrm{DoF}_\mathrm{rd}=2$
\item $\{K_\mathrm{t},K_{r}\}=\{3,5\}\Rightarrow
\mathrm{DoF}_\mathrm{sr}=4,\ \mathrm{DoF}_\mathrm{rd}=3$
\item $\{K_\mathrm{t},K_{r}\}=\{4,4\}\Rightarrow
\mathrm{DoF}_\mathrm{sr}=4,\ \mathrm{DoF}_\mathrm{rd}=4$
\item $\{K_\mathrm{t},K_{r}\}=\{5,3\}\Rightarrow
\mathrm{DoF}_\mathrm{sr}=4,\ \mathrm{DoF}_\mathrm{rd}=3$
\item $\{K_\mathrm{t},K_{r}\}=\{6,2\}\Rightarrow
\mathrm{DoF}_\mathrm{sr}=4,\ \mathrm{DoF}_\mathrm{rd}=2$
\item $\{K_\mathrm{t},K_{r}\}=\{7,1\}\Rightarrow
\mathrm{DoF}_\mathrm{sr}=4,\ \mathrm{DoF}_\mathrm{rd}=1$
\end{enumerate}
As can be seen in~\figurename{~\ref{fig:HdVsFdBD}}, by using more antennas for reception than for transmission, i.e., $K_{r}>K_\mathrm{t}$, at the relay, the throughput rate is maximized for worst-case FD relay. This is due to the fact that, increasing the signal-to-noise ratio (SNR) of the source-relay streams enhances the overall throughput rate more than the increase by the $\mathrm{DoF}$ of the relay-destination link. However, notice that in this setup the overall $\mathrm{DoF}$ from the source to destination is limited by the $\mathrm{DoF}$ of the source-relay link.

~\figurename{~\ref{fig:HdVsFdBD}} shows the sum rate as a function of $K_{r}$ for different values of $T$ where $\{M,K_\mathrm{t}+K_{r},N\}=\{4,8,4\}$ and $\{M,K_\mathrm{t}+K_{r},N\}=\{8,8,8\}$ respectively. As it can be seen, when $T=0$, by increasing the total $\mathrm{DoF}$ sum rate increases as well. Furthermore, at a specific total $\mathrm{DoF}$ increasing $T$ is more harmful for cases where $K_\mathrm{r}\leq K_\mathrm{t}$ compared to the cases where $K_{r}>K_\mathrm{t}$. For instance, for the case of $\{M,K_\mathrm{t},K_{r},N\}=\{4,3,5,4\}$ we have $\mathrm{DoF}_\mathrm{sr}=3$ and $\mathrm{DoF}_\mathrm{rd}=4$ which means the total $\mathrm{DoF}$ of the system is 3. Here, the results show that although the total $\mathrm{DoF}$ for both $\{M,K_\mathrm{t},K_{r},N\}=\{4,3,5,4\}$ and $\{M,K_\mathrm{t},K_{r},N\}=\{4,5,3,4\}$ is 3, the sum rate capacity of the latter is much better than that of the former. This is because of the fact that when $\mathrm{DoF}_\mathrm{sr}>\mathrm{DoF}_\mathrm{rd}$, interference can at most damage the SNR of $\mathrm{DoF}_\mathrm{sr}-\mathrm{DoF}_\mathrm{rd}$ sub channels at the source-relay side. Therefore, source can manage to gain more sum rate by choosing its power allocation wisely. On the other hand, in the case of $\mathrm{DoF}_\mathrm{sr}\leq\mathrm{DoF}_\mathrm{rd}$, no matter how well the power allocation is done, all sub channels suffer from interference at the source-relay end.

\section{Conclusion}
In this paper, we investigated a multi-antenna source communicating with a multi-antenna destination through a multi-antenna relay. The relay is assumed to exploit a decode-and-forward (DF) strategy. The transceivers are designed in order to be robust against the worst-case residual self-interference (RSI). To this end, the worst-case achievable throughput rate is maximized. This optimization problem turns out to be a non-convex problem. Assuming that the degrees-of-freedom (DoF) of the source-relay link is less than the DoF of the relay-destination link, we determined the left and right matrices of the singular vectors of the worst-case RSI channel. Then, the problem is simplified to the optimal power allocation at the transmitters, which guarantees robustness against the worst-case RSI singular values. This simplified problem is still non-convex. Based on the intuitions for optimal power allocation at the source and relay, we proposed an efficient algorithm to capture a stationary point. Hence, in a DF relay with multi-stream beamforming, we determine the critical point where the half-duplex relaying outperforms the full-duplex relaying. This critical point provides a mode-switching threshold in hybrid half-duplex full-duplex relay systems. 
\section{appendices}
\section*{Appendix I\\ proof of Theorem 1}\label{App:Ia}
Before stating the proof, first we introduce the following definitions.
\begin{definition}
For a vector $\boldsymbol{a}$, we denote vector $\boldsymbol{a}^{\downarrow}$ which has the same components as $\boldsymbol{a}$ except that they are sorted in a decreasing order.
\end{definition}
\begin{definition}
The vector $\boldsymbol{a}$ is said to be majorized by vector $\boldsymbol{b}$ and denoted by $\boldsymbol{a} \prec \boldsymbol{b}$ if we have:
\begin{align}
    \sum_{i = 1}^{K} a^{\downarrow}_{i} \leq \sum_{i = 1}^{K} b^{\downarrow}_{i},\\
    \sum_{i = 1}^{N} a^{\downarrow}_{i} = \sum_{i = 1}^{N} b^{\downarrow}_{i},
\end{align}
where $a^{\downarrow}_i$ is the $i$'th component of $\boldsymbol{a}^{\downarrow}$, $N$ is the number of vector components and $K \leq N$. If the last equality does not hold, $\boldsymbol{a}$ is said to be weakly majorized by $\boldsymbol{b}$ and denoted by $\boldsymbol{a} \prec_{w} \boldsymbol{b}$
\end{definition}
\begin{definition}
The vector $\boldsymbol{a}$ is said to be multiplicatively majorized by vector $\boldsymbol{b}$ and denoted by $\boldsymbol{a} \prec_{\times} \boldsymbol{b}$ if we have:
\begin{align}
    \prod_{i = 1}^{K} a^{\downarrow}_{i} \leq \prod_{i = 1}^{K} b^{\downarrow}_{i},\\
    \prod_{i = 1}^{N} a^{\downarrow}_{i} = \prod_{i = 1}^{N} b^{\downarrow}_{i}.
\end{align}
also, it is easy to check
\begin{equation}
\boldsymbol{a} \prec_{\times} \boldsymbol{b} \quad \Leftrightarrow \quad \log(\boldsymbol{a}) \prec \log(\boldsymbol{b})
\end{equation}
\end{definition}
To begin with, we know that for $n \times m$ matrix $\boldsymbol{A}$ and $m \times n$ matrix $\boldsymbol{B}$ we have $\lambda_i(\boldsymbol{A}\boldsymbol{B}) = \lambda_i(\boldsymbol{B}\boldsymbol{A}),\ \forall i\in\{1,\cdots,\min(m,n)\}$. Also the only difference between eigenvalues of $\boldsymbol{B}\boldsymbol{A}$ and $\boldsymbol{A}\boldsymbol{B}$ are the number of eigenvalues 0. Thus, non-zero eigenvalues of  $\bar{\mathbf{H}}_{r}\mathbf{Q}_{r}\bar{\mathbf{H}}^H_{r}$ and $\mathbf{Q}_{r}\bar{\mathbf{H}}^H_{r}\bar{\mathbf{H}}_{r}$ and also ${\mathbf{H}}_\mathrm{1}\mathbf{Q}_{s}{\mathbf{H}}^H_\mathrm{1}$ and $\mathbf{Q}_{s}{\mathbf{H}}^H_\mathrm{1}{\mathbf{H}}_\mathrm{1}$ are equal, respectively. Notice that all $\mathbf{Q}_{s}$,  ${\mathbf{H}}^H_\mathrm{1}{\mathbf{H}}_\mathrm{1}$, $\mathbf{Q}_{r}$ and $\bar{\mathbf{H}}^H_{r}\bar{\mathbf{H}}_{r}$ are square matrices. For ${\mathbf{H}}^H_\mathrm{1}{\mathbf{H}}_\mathrm{1}$ and $\bar{\mathbf{H}}^H_{r}\bar{\mathbf{H}}_{r}$ we define
$\lambda_i({\mathbf{H}}^H_\mathrm{1}{\mathbf{H}}_\mathrm{1}) = \sigma_i^{2}({\mathbf{H}}_\mathrm{1})$ and $\lambda_i(\bar{\mathbf{H}}^H_{r}\bar{\mathbf{H}}_{r}) =\sigma_i^{2}(\bar{\mathbf{H}}_{r})$ respectively.

As discussed in~Remark\ref{Remark1}, the equality $\lambda_i(\mathbf{Q}_{s}{\mathbf{H}}^H_\mathrm{1}{\mathbf{H}}_\mathrm{1}) = \lambda_{\rho\left(i\right)}(\mathbf{Q}_{s})\lambda_i({\mathbf{H}}^H_\mathrm{1}{\mathbf{H}}_\mathrm{1})$ does not hold in general. However, using the definition of determinant one can arrive at the following equality,
\begin{equation}
    \prod_{i = 1}^{\text{min}(M, K_r)} \lambda_i(\mathbf{Q}_{s}{\mathbf{H}}^H_\mathrm{1}{\mathbf{H}}_\mathrm{1}) = \prod_{i=1}^{\text{min}(M, K_r)} \lambda_i(\mathbf{Q}_{s})\sigma^2_i({\mathbf{H}}_\mathrm{1}).
\end{equation}
Now, we define vector $\boldsymbol{\lambda}(\mathbf{Q}'_{s})$ and set its components to be ${\lambda_{\rho{\left(i\right)}}}(\mathbf{Q}'_{s}) = \frac{\lambda_i(\mathbf{Q}_{s}{\mathbf{H}}^H_\mathrm{1}{\mathbf{H}}_\mathrm{1})}{\sigma^2_i({\mathbf{H}}_\mathrm{1})}$. By defining $\lambda_{\rho\left(i\right)}(\mathbf{Q}'_{s})$ instead of $\lambda_{i}(\mathbf{Q}'_{s})$, we emphasize that the elements of $\boldsymbol{\lambda}(\mathbf{Q}'_{s})$ are not necessarily in decreasing order. 
Then, we construct the matrix $\mathbf{Q}'_{s}$ having the same eigenvectors as those of ${\mathbf{H}}^H_\mathrm{1}{\mathbf{H}}_\mathrm{1}$ and the eigenvalues $\lambda_{\rho\left(i\right)}(\mathbf{Q}'_{s})$. One can check that for each $i$ we have $\lambda_i(\mathbf{Q}'_{s}{\mathbf{H}}^H_\mathrm{1}{\mathbf{H}}_\mathrm{1})=\lambda_i(\mathbf{Q}_{s}{\mathbf{H}}^H_\mathrm{1}{\mathbf{H}}_\mathrm{1})$. Also, by the definition of $\lambda_{\rho\left(i\right)}(\mathbf{Q}'_{s})$ we have:
\begin{align}
       &\lambda_{\rho\left(i\right)}(\mathbf{Q}'_{s}) = \frac{\lambda_i(\mathbf{Q}_{s}{\mathbf{H}}^H_\mathrm{1}{\mathbf{H}}_\mathrm{1})}{\sigma^2_i({\mathbf{H}}_\mathrm{1})} ,\\ 
       \Rightarrow &\!\log{\!\left(\lambda_{\rho\left(i\right)}(\mathbf{Q}'_{s})\!\right)}\!=\!\log\left( \lambda_i(\mathbf{Q}_{s}{\mathbf{H}}^H_\mathrm{1}{\mathbf{H}}_\mathrm{1})\right)\!-\!\log\!\left(\sigma^2_i({\mathbf{H}}_\mathrm{1})\right),\\ 
       \Rightarrow &\log{\left(\boldsymbol{\lambda}(\mathbf{Q}'_{s})\right)}=\log\left( \boldsymbol{\lambda}(\mathbf{Q}_{s}{\mathbf{H}}^H_\mathrm{1}{\mathbf{H}}_\mathrm{1})\right) - \log\left(\boldsymbol{\sigma}^2({\mathbf{H}}_\mathrm{1})\right).
\end{align}

\begin{lemma}
If $\boldsymbol{A}$ and $\boldsymbol{B}$ are semidefinite Hermitian matrices with $\lambda_{\min(m,n)}(\boldsymbol{AB}) > 0$, then
\begin{equation}
\log(\boldsymbol{\lambda}(\boldsymbol{AB})) -\log(\boldsymbol{\lambda}(\boldsymbol{B})) \prec \log(\boldsymbol{\lambda}(\boldsymbol{A})).
\end{equation}
\end{lemma}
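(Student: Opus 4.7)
The conclusion is the multiplicative form of Lidskii's theorem (the Gelfand--Naimark inequality) for products of PSD matrices, and I would prove it by reducing it, through logarithms, to the Horn/Weyl multiplicative majorant. First, I would establish the classical product-majorant bound
\begin{equation*}
\prod_{i=1}^{k}\lambda_i(\boldsymbol{A}\boldsymbol{B}) \ \le\ \prod_{i=1}^{k}\lambda_i(\boldsymbol{A})\,\lambda_i(\boldsymbol{B}), \qquad k=1,\dots,\min(m,n),
\end{equation*}
valid for PSD $\boldsymbol{A},\boldsymbol{B}$. One short route is to note that $\boldsymbol{A}\boldsymbol{B}$ is similar to the PSD matrix $\boldsymbol{A}^{1/2}\boldsymbol{B}\boldsymbol{A}^{1/2}$, whose eigenvalues are the squared singular values of $\boldsymbol{B}^{1/2}\boldsymbol{A}^{1/2}$, so Ky Fan's singular-value product bound $\prod_{i=1}^k\sigma_i(XY)\le\prod_{i=1}^k\sigma_i(X)\sigma_i(Y)$ applies. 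Taking logarithms then yields
\begin{equation*}
\sum_{i=1}^{k}\bigl[\log\lambda_i(\boldsymbol{A}\boldsymbol{B})-\log\lambda_i(\boldsymbol{B})\bigr]\ \le\ \sum_{i=1}^{k}\log\lambda_i(\boldsymbol{A}),
\end{equation*}
with equality at $k=\min(m,n)$ because $\det(\boldsymbol{A}\boldsymbol{B})=\det(\boldsymbol{A})\det(\boldsymbol{B})$; this equality of total sums furnishes the $=$ half of (non-weak) majorization, while the hypothesis $\lambda_{\min(m,n)}(\boldsymbol{A}\boldsymbol{B})>0$ ensures all logarithms are finite and the vector arithmetic is well-defined.

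The final step is to upgrade this ``aligned'' prefix-sum bound into the top-$k$-after-sorting form required by $\prec$, namely $\sum_{i=1}^{k}[\log\boldsymbol{\lambda}(\boldsymbol{A}\boldsymbol{B})-\log\boldsymbol{\lambda}(\boldsymbol{B})]^{\downarrow}_i \le \sum_{i=1}^{k}\log\lambda_i(\boldsymbol{A})$. The subtlety is that the componentwise difference of two separately sorted decreasing vectors need not itself be sorted, so the LHS may involve a nontrivial re-permutation. I would close this gap by lifting to the $k$-th compound (antisymmetric-tensor-power) matrix: using $\wedge^k(\boldsymbol{A}\boldsymbol{B})=(\wedge^k\boldsymbol{A})(\wedge^k\boldsymbol{B})$, whose eigenvalues are the subset products $\prod_{i\in S}\lambda_i(\cdot)$ over $|S|=k$, together with the PSD bound $\lambda_j(\boldsymbol{X}\boldsymbol{Y})\le\lambda_1(\boldsymbol{X})\lambda_j(\boldsymbol{Y})$, one obtains the subset-indexed inequality $\prod_{i\in S}\lambda_i(\boldsymbol{A}\boldsymbol{B})\le\bigl(\prod_{i=1}^k\lambda_i(\boldsymbol{A})\bigr)\bigl(\prod_{i\in S}\lambda_i(\boldsymbol{B})\bigr)$ for every $|S|=k$. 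Taking logarithms and specializing $S$ to the indices of the $k$ largest entries of $\log\boldsymbol{\lambda}(\boldsymbol{A}\boldsymbol{B})-\log\boldsymbol{\lambda}(\boldsymbol{B})$ then delivers the desired sorted top-$k$ inequality.

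The main obstacle will be exactly this sorting step. Horn's multiplicative majorant by itself controls only $\sum_{i=1}^k[\log\lambda_i(\boldsymbol{A}\boldsymbol{B})-\log\lambda_i(\boldsymbol{B})]$ with the indices coupled in the natural decreasing ordering, whereas the definition of $\prec$ asks for the sum of the top-$k$ entries of the \emph{unsorted} difference vector, which in general differs. Bridging this gap is precisely the content that distinguishes multiplicative Lidskii from Horn, and is where the compound-matrix argument is doing the real work; everything else reduces to cataloguing classical PSD eigenvalue inequalities and invoking $\det(\boldsymbol{A}\boldsymbol{B})=\det(\boldsymbol{A})\det(\boldsymbol{B})$ for the equality case.
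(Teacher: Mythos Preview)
The paper does not actually prove this lemma; it simply cites Marshall and Olkin's monograph (item 9.H.1.e). Your proposal therefore goes well beyond the paper by attempting a self-contained argument, and your diagnosis of the problem is sharp: Horn's product inequality together with $\det(\boldsymbol{A}\boldsymbol{B})=\det(\boldsymbol{A})\det(\boldsymbol{B})$ handles the aligned prefix sums and the total-sum equality, and the only real content is the ``sorting step'' that upgrades these to the top-$k$ sums of the \emph{resorted} difference vector, which is exactly the Gelfand--Naimark refinement of Horn.

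However, your compound-matrix argument for that step does not close the gap as written. Applying $\lambda_j(\boldsymbol{X}\boldsymbol{Y})\le\lambda_1(\boldsymbol{X})\lambda_j(\boldsymbol{Y})$ with $\boldsymbol{X}=\wedge^k\boldsymbol{A}$ and $\boldsymbol{Y}=\wedge^k\boldsymbol{B}$ gives a \emph{rank-coupled} bound: the $j$-th largest eigenvalue of $\wedge^k(\boldsymbol{A}\boldsymbol{B})$ is controlled by the $j$-th largest eigenvalue of $\wedge^k\boldsymbol{B}$. But for a fixed index set $S$, the products $\prod_{i\in S}\lambda_i(\boldsymbol{A}\boldsymbol{B})$ and $\prod_{i\in S}\lambda_i(\boldsymbol{B})$ need not sit at the same rank inside their respective compound spectra; already for $n=4$, $k=2$ the relative ordering of the subset products $\{1,4\}$ versus $\{2,3\}$ depends on the underlying eigenvalues, so the rank-$j$ inequality cannot simply be ``specialized to $S$.'' What you have actually derived is $\lambda_j(\wedge^k(\boldsymbol{A}\boldsymbol{B}))\le\lambda_1(\wedge^k\boldsymbol{A})\,\lambda_j(\wedge^k\boldsymbol{B})$, which is log-submajorization of the compound spectra, not the subset-indexed bound $\prod_{i\in S}\lambda_i(\boldsymbol{A}\boldsymbol{B})\le\bigl(\prod_{i=1}^k\lambda_i(\boldsymbol{A})\bigr)\prod_{i\in S}\lambda_i(\boldsymbol{B})$ that you need. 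The standard routes to the latter require an extra ingredient beyond a single application of $\lambda_j(\boldsymbol{X}\boldsymbol{Y})\le\lambda_1(\boldsymbol{X})\lambda_j(\boldsymbol{Y})$: either a min--max (Courant--Fischer type) characterization of the subset product $\prod_{j}\lambda_{i_j}$, or the additive Lidskii--Wielandt theorem applied to $\log$ of suitably transformed matrices, or the Li--Mathias interlacing argument. Your outline is correct in spirit and you have correctly located the difficulty, but the compound-matrix step as stated is where the proof would fail and needs to be replaced by one of these sharper tools.
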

\begin{proof}
The proof is given in \cite[H.1,e]{marshall1979inequalities}.
\end{proof}
\noindent Using the above lemma we can conclude
\begin{equation}
\log\left(\boldsymbol{\lambda}(\mathbf{Q}'_{s})\right) \prec \log\left(\boldsymbol{\lambda}(\mathbf{Q}_{s})\right).
\end{equation}

\noindent Then, immediately we can conclude
\begin{equation}
\boldsymbol{\lambda}(\mathbf{Q}'_{s}) \prec_{\times} \boldsymbol{\lambda}(\mathbf{Q}_{s}).
\end{equation}

\begin{remark}
It is worth mentioning that, depending on channel realizations, the optimal $\mathbf{Q}_{s}$ might contain some zero eigenvalues. In such cases, we can simply ignore the zeros and construct matrix $\mathbf{Q}'_{s}$ with dimension $(n-k) \times (n-k)$ matrix . Similarly, in the cases where ${\bar{\mathbf{H}}}^H_{r}{\bar{\mathbf{H}}_{r}}$ has some zero eigenvalues, we can do the same and proceed to constitute $\bar{\mathbf{H}}'_{s}$ using only nonzero eigenvalues of ${\bar{\mathbf{H}}}^H_{r}{\bar{\mathbf{H}}_{r}}$ and add the zeros back to the result again at the end.
\end{remark}
Finally, we use the following lemma to show that $\bar{\mathbf{H}}'_{\mathrm{r}}$ and $\mathbf{Q}'_{\mathrm{s}}$ are in the feasible set.
\begin{lemma}
For two vectors $\boldsymbol{a}$ and $\boldsymbol{b}$, if we have $\boldsymbol{a} \prec_{\times} \boldsymbol{b}$, then we have $\boldsymbol{a} \prec_{w} \boldsymbol{b}$.
\end{lemma}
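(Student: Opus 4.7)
The plan is to reduce multiplicative majorization to ordinary (additive) majorization via the logarithm, and then invoke a standard convexity result. First I would use the equivalence stated immediately below Definition~4, namely $\boldsymbol{a}\prec_\times \boldsymbol{b} \Leftrightarrow \log \boldsymbol{a} \prec \log \boldsymbol{b}$, so that the hypothesis gives the additive majorization $\log \boldsymbol{a} \prec \log \boldsymbol{b}$ of the logarithmic vectors. Note that the positivity needed to take logarithms is guaranteed in the applications of interest here, since $\boldsymbol{a}$ and $\boldsymbol{b}$ are eigenvalue vectors of positive-definite matrices (with any zero components handled exactly as described in the preceding remark).

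Next I would invoke the classical convexity-based inequality of Hardy, Littlewood, and P\'olya (see, e.g., the Marshall--Olkin monograph already cited for Lemma~1): if $\boldsymbol{x} \prec \boldsymbol{y}$ and $\phi$ is a convex nondecreasing function, then for every $K \leq N$,
\[
\sum_{i=1}^K \phi\bigl(x_i^\downarrow\bigr) \leq \sum_{i=1}^K \phi\bigl(y_i^\downarrow\bigr).
\]
I would then apply this with $\phi(t)=e^t$, which is convex and strictly increasing, and with $\boldsymbol{x}=\log \boldsymbol{a}$, $\boldsymbol{y}=\log \boldsymbol{b}$. Because $\exp$ is monotone, decreasing rearrangement commutes with it, so $e^{(\log a)_i^\downarrow}=a_i^\downarrow$ and similarly for $\boldsymbol{b}$. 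The resulting partial-sum bound reads
\[
\sum_{i=1}^K a_i^\downarrow \leq \sum_{i=1}^K b_i^\downarrow, \qquad K = 1,\dots,N,
\]
which is precisely the definition of weak majorization $\boldsymbol{a} \prec_w \boldsymbol{b}$.

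The only subtle point, and the one I would flag in the write-up, is that although $\boldsymbol{a}\prec_\times \boldsymbol{b}$ enforces total-product equality (equivalently, total equality of the logarithmic sums), the strict convexity of $\exp$ prevents this equality from being transported to the total sums of $\boldsymbol{a}$ and $\boldsymbol{b}$. This is exactly why the conclusion is only \emph{weak} majorization rather than full majorization, which is consistent with the statement of the lemma. Apart from pinning down this point, no hard step is involved; the argument is essentially a one-line reduction once the Hardy--Littlewood--P\'olya convexity inequality is taken as a black box.
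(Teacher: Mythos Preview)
Your argument is correct and is exactly the standard proof of this fact; the paper itself does not give an independent proof but simply cites Marshall--Olkin~[5.A.2.b], whose argument is precisely the logarithm-plus-convexity reduction you outline.
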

\begin{proof}
The proof if given in~\cite[5.A.2.b]{marshall1979inequalities}.
\end{proof}

Exploiting the above lemma one concludes:

\begin{align}
\boldsymbol{\lambda}(\mathbf{Q}'_{s}) \prec_{\times} \boldsymbol{\lambda}(\mathbf{Q}_{s})\ \Rightarrow \  \boldsymbol{\lambda}(\mathbf{Q}'_{s}) \prec_{w} \boldsymbol{\lambda}(\mathbf{Q}_{s}),
\end{align}
which consequently results in
\begin{align}
\sum_{i = 1}^{N}{\lambda_i}(\mathbf{Q}'_{s}) \leq \sum_{i = 1}^{N}{\lambda_i}(\mathbf{Q}_{s}) \Rightarrow  \text{Tr} (\mathbf{Q}'_{s})\leq\text{Tr}(\mathbf{Q}_{s}).
\end{align}
Therefore, there exists $\mathbf{Q}'_{s}$ and $\bar{\mathbf{H}}'_{r}$ fulfilling \eqref{P:newo1a}-\eqref{P:newo1c}, which satisfy
\begin{align}
       \sum_{i = 1}^{\min{(M,K_r)}} \log_2 &\left( 1+ \frac{\lambda_i\left(\mathbf{H}_1\mathbf{Q}_{s}\mathbf{H}^H_1\right)}{1+\lambda_i\left(\bar{\mathbf{H}}_{r}\mathbf{Q}_{r}\bar{\mathbf{H}}^H_{r}\right)} \right) = \\ &\sum_{i = 1}^{\min{(M,K_r)}} \log_2 \left( 1+ \frac{\lambda_{\rho(i)}(\mathbf{Q}'_{s})\sigma^2_i({\mathbf{H}}_\mathrm{1})}{1+\lambda_{i}(\mathbf{Q}_{r})\sigma^2_{\rho(i)}({\bar{\mathbf{H}}'}_\mathrm{r})} \right). 
\end{align}

\section*{Appendix II\\ proof of proposition 2}
In this section we prove that the problem 
\begin{subequations}\label{P:FDc}
\begin{align}
\max_{\boldsymbol{\gamma}_{s},\boldsymbol{\gamma}_{r}}\quad  \min_{\boldsymbol{\sigma}_{{r}}}\quad & \sum_{i=1}^{\min(M,K_{r})} \log_2{\left(1+
\frac{\sigma_{1_i}^2\gamma_{{s}_{\rho{\left(i\right)}}}}{1+\gamma_{{r}_i}\sigma_{{s}_{\rho{\left(i\right)}}}^2}\right)} \tag{\ref{P:FDc}}\\
\text{s.t.}\quad & \|\boldsymbol{\gamma}_{s}\|_1\leq P_{s},\label{P:FDc:ConsA}\\ 
&\|\boldsymbol{\gamma}_{r}\|_1\leq P_{r},\label{P:FDc:ConsB}\\
&\|\boldsymbol{\sigma}^2_{r}\|_1\leq T,\label{P:FDc:ConsC}\\
& \sigma_{1_i}^2\gamma_{{s}_{{\rho{\left(i\right)}}}} \geq \sigma_{1_{i+1}}^2\gamma_{{s}_{{\rho{\left(i+1\right)}}}},\ \forall i \leq \min(M,K_{{r}}),\label{appP:FDb:ConsA}\\
& \gamma_{{r}_i}\sigma_{{r}_{{\rho{\left(i\right)}}}}^2 \geq \gamma_{{r}_{i+1}}\sigma_{{r}_{{\rho{\left(i+1\right)}}}}^2,\ {\forall i \leq \min(K_{\mathrm{t}},N).}\label{appP:FDb.ConsB}
\end{align}
\end{subequations} 
can be further simplified to 
\begin{subequations}\label{P:FDd}
\begin{align}
\max_{\boldsymbol{\gamma}_{s},\boldsymbol{\gamma}_{r}}\quad  \min_{\boldsymbol{\sigma}_{{r}}}\quad & \sum_{i=1}^{\min(M,K_{r})} \log_2{\left(1+
\frac{\sigma_{1_i}^2\gamma_{{s}_{\rho{\left(i\right)}}}}{1+\gamma_{{r}_i}\sigma_{{s}_{\rho{\left(i\right)}}}^2}\right)} \tag{\ref{P:FDd}}\\
\text{s.t.}\quad & \|\boldsymbol{\gamma}_{s}\|_1 = P_{s},\label{P:FDd:ConsA}\\ 
&\|\boldsymbol{\gamma}_{r}\|_1\leq P_{r},\label{P:FDd:ConsB}\\
&\|\boldsymbol{\sigma}^2_{r}\|_1 = T.\label{P:FDd:ConsC}\\
& \sigma_{1_i}^2\gamma_{{s}_{{\rho{\left(i\right)}}}} \geq \sigma_{1_{i+1}}^2\gamma_{{s}_{{\rho{\left(i+1\right)}}}},\ \forall i \leq \min(M,K_{{r}}),\label{app2P:FDb:ConsA}\\
& \gamma_{{r}_i}\sigma_{{r}_{{\rho{\left(i\right)}}}}^2 \geq \gamma_{{r}_{i+1}}\sigma_{{r}_{{\rho{\left(i+1\right)}}}}^2,\ {\forall i \leq \min(K_{\mathrm{t}},N).}\label{app2P:FDb.ConsB}
\end{align}
\end{subequations}
The proof is by contradiction. Starting with the minimization, assume that the optimal vector $\boldsymbol{\sigma^\star}^2_{r}$, for which we have $R^{\text{FD}}_{\text{sr}}(\boldsymbol{\sigma^\star}^2_{r}) \leq R^{\text{FD}}_{\text{sr}}(\boldsymbol{\sigma}^2_{r})$, does not sum to $T$ and thus, we have $\|\boldsymbol{\sigma^\star}^2_{r}\|_1 < T$. Then there exists $\varepsilon > 0$ for which we have $\|\boldsymbol{\sigma^\star}^2_{r}\|_1 +\varepsilon = T$. Now define
\begin{align}
    \varepsilon_i = \frac{\varepsilon \frac{\sigma^2_{1_i}}{\gamma_{r_i}}}{\sum_{j}\frac{\sigma^2_{1_{j}}}{\gamma_{r_{j}}}}.
\end{align}
Note that we have 
\begin{align}
    \sum_{i}\varepsilon_i = \varepsilon, \quad \varepsilon_i \geq 0.
\end{align}
Also, as we have $\varepsilon > 0$, there is at least one $\varepsilon_i$ which is strictly greater than zero i.e. $\varepsilon_i > 0$. 
Now define 
\begin{align}
    \sigma'^2_{r_{\rho\left(i\right)}}= {\sigma^\star}^2_{r_{\rho\left(i\right)}}+\varepsilon_i.
\end{align}
One can check that $\sum_{i} \sigma'^2_{r_{\rho\left(i\right)}} = T$ and $\gamma_{{r}_i}{\sigma'}_{{r}_{{\rho{\left(i\right)}}}}^2 \geq \gamma_{{r}_{i+1}}{\sigma'}_{{r}_{{\rho{\left(i+1\right)}}}}^2,\ {\forall i \leq \min(K_{\mathrm{t}},N)}$. As a result $\boldsymbol{\sigma}'^2_{r_{\rho\left(i\right)}}$ meets the constraints and could be a feasible solution. Note that, as $\boldsymbol{\gamma}^\star_\mathrm{s}$ is the optimal source power allocation based on all other parameters, by changing $\boldsymbol{\sigma^\star}^2_{r}$ to $\boldsymbol{\sigma}^2_{r}$, $\boldsymbol{\gamma}^\star_{s}$ might also change. However, we created each ${\sigma}^2_{r_{\rho\left(i\right)}}$ in a special way to avoid this change. To show this, first notice that we have
\begin{align}
    \gamma^\star_{s_{\rho\left(i\right)}} &= \left[\lambda - \frac{1+\gamma_{r_i}{\sigma^\star}^2_{r_{\rho\left(i\right)}}}{\sigma^2_{1_i}}\right]^+,
\end{align}
where $\lambda$ is water level and can be found based on power constraints.
Substituting new power allocation for interference, we get new power allocation for input power as follow
\begin{align}
    \gamma_{s_{\rho\left(i\right)}} &= \left[\lambda - \frac{1+\gamma_{r_i}{\sigma'}^2_{r_{\rho\left(i\right)}}}{\sigma^2_{1_i}}\right]^+\\
    & = \left[\lambda - \frac{1+\gamma_{r_i}({\sigma^\star}^2_{r_{\rho\left(i\right)}}+\varepsilon_i)}{\sigma^2_{1_i}}\right]^+\\
    & = \left[\lambda - \frac{1+\gamma_{r_i}{\sigma^\star}^2_{r_{\rho\left(i\right)}}}{\sigma^2_{1_i}}+\frac{\varepsilon}{\sum_{j=1}^{N}\frac{\sigma^2_{1_{j}}}{\gamma_{r_{j}}}}\right]^+\\
    &\stackrel{(a)}{=}\left[\lambda' - \frac{1+\gamma_{r_i}{\sigma^\star}^2_{r_{\rho\left(i\right)}}}{\sigma^2_{1_i}}\right]^+\\
    &= \gamma^\star_{s_{\rho\left(i\right)}},
\end{align}
where $(a)$ comes from the fact that $\sfrac{\varepsilon}{\sum_{j=1}^{N}\frac{\sigma^2_{1_{j}}}{\gamma_{r_{j}}}}$ is a constant independent of $i$, so we can define $\lambda'= \lambda+\sfrac{\varepsilon}{\sum_{j=1}^{N}\frac{\sigma^2_{1_{j}}}{\gamma_{r_{j}}}}$. This shows, for $\boldsymbol{\sigma'}^2_{r}$, all the optimal variables and parameters remain the same as those of $\boldsymbol{\sigma^\star}^2_{r}$. Now we compare $R^{\text{FD}}_{\text{sr}}$ for both cases. First, notice that we have $\forall i, \varepsilon_i \geq 0$ and among them there is at least one index $i'$, for which we have $\varepsilon_{i'}>0$. This means $\forall i, {\sigma'}^2_{r_{\rho\left(i\right)}} \geq {\sigma^\star}^2_{r_{\rho\left(i\right)}}$ and ${\sigma'}^2_{r_{\rho\left(i'\right)}} > {\sigma^\star}^2_{r_{\rho\left(i'\right)}}$. Now, notice that $f_i(x) = \log_2 \left( 1 + \frac{\sigma^2_{1_i}\gamma^\star_{s_{\rho\left(i\right)}}}{1+\gamma_{r_i}x} \right)$ is a monotonically decreasing function of $x$. Thus, we have $f_i({\sigma'}^2_{r_{\rho\left(i\right)}}) \leq f_i({\sigma^\star}^2_{r_{\rho\left(i\right)}})$ and $f_{i'}({\sigma'}^2_{r_{\rho\left(i'\right)}}) < f_{i'}({\sigma^\star}^2_{r_{\rho\left(i'\right)}})$.
Adding all above inequalities, we get
\begin{align}
\sum_{i=1}^{\min(M,K_{r})} \log_2{\left(1+
\frac{\sigma_{1_i}^2\gamma^\star_{{s}_{\rho\left(i\right)}}}{1+\gamma_{{r}_i}{\sigma'}^2_{{r}_{\rho{\left(i\right)}}}}\right)}& <\\ \sum_{i=1}^{\min(M,K_{r})} &\log_2{\left(1+
\frac{\sigma_{1_i}^2\gamma^\star_{{s}_{\rho\left(i\right)}}}{1+\gamma_{{r}_i}{\sigma^\star}^2_{{r}_{\rho\left(i\right)}}}\right)}.
\end{align}
The above equation indicates $R^{\text{FD}}_{\text{sr}}(\boldsymbol{\sigma^\star}^2_{r})>R^{\text{FD}}_{\text{sr}}(\boldsymbol{\sigma}^2_{r})$ which contradicts the first assumption $R^{\text{FD}}_{\text{sr}}(\boldsymbol{\sigma^\star}^2_{r})\leq R^{\text{FD}}_{\text{sr}}(\boldsymbol{\sigma}^2_{r})$. Then, the proof of the minimization part is complete.

For the maximization part, the general idea is the same. Again, the proof is by contradiction. We assume the optimal vector $\boldsymbol{\gamma}^\star_{{s}}$, for which we have $R^{\text{FD}}_{\text{sr}}(\boldsymbol{\gamma}^\star_{{s}}) \geq R^{\text{FD}}_{\text{sr}}(\boldsymbol{\gamma}_{{s}})$, does not sum to $P_{s}$. Therefore, we have $\|\boldsymbol{\gamma}_{{s}}\|_1 < P_{s}$. Then there exists $\varepsilon > 0$ for which we have $\|\boldsymbol{\gamma}_{{s}}\|_1 +\varepsilon = P_{s}$. Now we define 
\begin{align}
    \varepsilon_i = \frac{\varepsilon}{\eta}\left( \frac{1+{\sigma^\star}^2_{r_i}\gamma_{r_i}}{\sigma^2_{1_i}}+\gamma^\star_{s_i}\right),
\end{align}
where, $\eta = \sum_i \left(\frac{1+{\sigma^\star}^2_{r_i}\gamma_{r_i}}{\sigma^2_{1_i}}+\gamma^\star_{s_i}\right)$. Now we define the new source power allocation as below
\begin{align}
    {\gamma'}_{s_{\rho\left(i\right)}}= \gamma^\star_{s_{\rho\left(i\right)}}+\varepsilon_i.
\end{align}
One can check that $\sum_i {\gamma'}_{s_{\rho\left(i\right)}} = P_s$ and $\sigma^2_{1_i}{\gamma'}_{s_{\rho(i)}} \geq \sigma^2_{1_{i+1}}{\gamma'}_{s_{\rho(i+1)}}$. Thus, the new source power allocation is in feasible set. Now the remaining is to make sure the new allocation does not change the corresponding $\boldsymbol{\sigma}^2_\mathrm{r}$. Using Lagrangian multiplier we have
\begin{align}
L &= \sum_{i} \!\log_2{\!\left(1+
\frac{\sigma_{1_i}^2{\gamma'}_{{s}_{\rho{\left(i\right)}}}}{1+\gamma_{{r}_i}\sigma_{{s}_{\rho{\left(i\right)}}}^2}\right)} + \lambda \left( \sum_{i=0}^{N}\sigma^2_{r_i} - T\right),\\
&= \sum_{i} \!\log_2{\!\left(1+
\frac{\sigma_{1_i}^2(\gamma^\star_{{s}_{\rho{\left(i\right)}}}+\varepsilon_i)}{1+\gamma_{{r}_i}\sigma_{{s}_{\rho{\left(i\right)}}}^2}\right)} + \lambda \left( \sum_{i=0}^{N}\sigma^2_{r_i} - T\right),\\
&= \sum_{i} \!\log_2{\!\left((1+\frac{\varepsilon}{\eta})\left(1+
\frac{\sigma_{1_i}^2\gamma^\star_{{s}_{\rho{\left(i\right)}}}}{1+\gamma_{{r}_i}\sigma_{{s}_{\rho{\left(i\right)}}}^2}\right)\right)} \nonumber \\ 
&\ \ \ \ + \lambda \left( \sum_{i=0}^{N}\sigma^2_{r_i} - T\right),\\
&= \sum_{i} \!\log_2{\!(1+\frac{\varepsilon}{\eta})}+\sum_i\log_2{\left(1+
\frac{\sigma_{1_i}^2\gamma^\star_{{s}_{\rho{\left(i\right)}}}}{1+\gamma_{{r}_i}\sigma_{{s}_{\rho{\left(i\right)}}}^2}\right)} \nonumber \\ 
&\ \ \ \ + \lambda \left( \sum_{i=0}^{N}\sigma^2_{r_i} - T\right).\\
\end{align}
Now notice that we have $\frac{\partial \sum_{i} \!\log_2{\!(1+\frac{\varepsilon}{\eta})}}{\partial \sigma^2_{\mathrm{r}_i}}=0$ and $\frac{\partial \sum_{i} \!\log_2{\!(1+\frac{\varepsilon}{\eta})}}{\partial \lambda}=0$. As a result, the optimum interference allocation for $\boldsymbol{\gamma}'_\mathrm{r}$
 is the same as that of $\boldsymbol{\gamma}^\star_\mathrm{r}$. Similarly to the case of minimization, here we have $\sum_{i}\varepsilon_i= \varepsilon$. Also we have $\varepsilon_i \geq 0$ and there exist at least one $i'$ for which we have $\varepsilon_{i'} > 0$. Finally as $f_i(x) = \log \left(1+\frac{\sigma^2_{1_i}x}{1+{\sigma^\star}^2_{r_{\rho\left(i\right)}}\gamma_{r_i}}\right)$ is a monotonically increasing function of $x$, we conclude $R^{\text{FD}}_{\text{sr}}(\boldsymbol{\gamma}^\star_{s})<R^{\text{FD}}_{\text{sr}}(\boldsymbol{{\gamma'}}_{s})$ which contradicts the first assumption of $\boldsymbol{\gamma}^\star_{s}$ being the optimal source power allocation, and the proof is complete.
\section*{Appendix III}
First, we show $R^{\text{FD}}_\text{sr}$ is a decreasing function of $T$ and an increasing function of ${P}_{s}$. It is sufficient to show $\frac{d R^{\text{FD}}_\text{sr}}{d P_{{s}}} \geq 0$ and $\frac{d R^{\text{FD}}_\text{sr}}{d T} \leq 0$. We have
\begin{align}
        \frac{d R^{\text{FD}}_\text{sr}}{d P_\text{s}} &=\frac{\sum_{i}^{} \frac{\partial R^{\text{FD}}_\text{sr}}{\partial \gamma_{s_{\rho\left(i\right)}}}d \gamma_{s_{\rho\left(i\right)}}}{\sum_{i}^{} \frac{\partial P_\text{s}}{\partial \gamma_{s_{\rho\left(i\right)}}}d \gamma_{s_{\rho\left(i\right)}}} = \frac{\sum_{i}^{} \frac{\sigma^2_{1_i}}{1+\sigma^2_{r_{\rho\left(i\right)}}\gamma_{r_i}+\sigma^2_{1_i}\gamma_{s_{\rho\left(i\right)}}}d \gamma_{s_{\rho\left(i\right)}}}{\sum_{i}^{} d \gamma_{s_{\rho\left(i\right)}}}\\& \geq \frac{\sum_{i}^{}\phi_1d \gamma_{s_{\rho\left(i\right)}}}{\sum_{i}^{} d \gamma_{s_{\rho\left(i\right)}}}=\phi_1>0,\\
    \frac{d R^{\text{FD}}_\text{sr}}{d T} &=\frac{\sum_{i}^{} \frac{\partial R^{\text{FD}}_\text{sr}}{\partial \sigma^2_{r_{\rho\left(i\right)}}}d \sigma^2_{r_{\rho\left(i\right)}}}{\sum_{i}^{} \frac{\partial T}{\partial \sigma^2_{r_{\rho\left(i\right)}}}d \sigma^2_{r_{\rho\left(i\right)}}} \\&= \frac{\sum_{i}^{}\frac{-\sigma^2_{1_{i}}\gamma_{s_{\rho\left(i\right)}}\gamma_{\mathrm{r}_i}}{\left(1+\sigma^2_{r_{\rho\left(i\right)}}\gamma_{r_i}\right)\left(1+\sigma^2_{1_{i}}\gamma_{s_{\rho\left(i\right)}}+\sigma^2_{r_{i}}\gamma_{r_i}\right)}d \sigma^2_{r_{\rho\left(i\right)}}}{\sum_{i}^{} d \sigma^2_{r_{\rho\left(i\right)}}}\\& \leq \frac{\sum_{i}^{}-\phi_2d \sigma^2_{r_{\rho\left(i\right)}}}{\sum_{i}^{} d \sigma^2_{r_{\rho\left(i\right)}}}=-\phi_2\leq0,
\end{align}
where 
\begin{align}
    \phi_1\stackrel{.}{=} \min_i \Bigg\{ \frac{\sigma^2_{1_i}}{1+\sigma^2_{r_{\rho\left(i\right)}}\gamma_{r_i}+\sigma^2_{1_i}\gamma_{s_{\rho\left(i\right)}}}\Bigg\}
\end{align} 
and 
\begin{align}
    \phi_2\stackrel{.}{=} \min_i \Bigg\{\frac{-\sigma^2_{1_{i}}\gamma_{s_{\rho\left(i\right)}}\gamma_{r_i}}{\left(1+\sigma^2_{r_{\rho\left(i\right)}}\gamma_{r_i}\right)\left(1+\sigma^2_{1_{i}}\gamma_{s_{\rho\left(i\right)}}+\sigma^2_{r_{\rho\left(i\right)}}\gamma_{r_i}\right)}\Bigg\}
\end{align}
respectively.

Next, we show $g(P_{r})=R^{\text{FD}}_\text{sr}(P_{r})-R^{\text{FD}}_\text{rd}(P_{r})$ is a monotonically decreasing function of ${P}_{r}$. It is sufficient to show $\frac{d R^{\text{FD}}_\text{sr}}{d P_{{r}}} \leq 0$ and $\frac{d R^{\text{FD}}_\text{rd}}{d P_{{r}}} > 0$. We have:
\begin{align}
    d R^{\text{FD}}_\text{rd} = \sum_{i}^{}& \frac{\partial R^{\text{FD}}_\text{rd}}{\partial \gamma_{r_i}}d \gamma_{r_i}= \sum_{i}^{} \frac{\sigma^2_{2_i}}{1+\sigma^2_{2_i} \gamma_{r_i}}d \gamma_{r_i}\\
    d R^{\text{FD}}_\text{sr} = \sum_{i}^{}& \frac{\partial R^{\text{FD}}_\text{sr}}{\partial \gamma_{r_i}}d \gamma_{r_i}=\\
    &\sum_{i}^{} \frac{-\sigma^2_{1_{i}}\gamma_{\mathrm{s}_{\rho\left(i\right)}}\sigma^2_{\mathrm{r}_{\rho\left(i\right)}}}{\left(1+\sigma^2_{\mathrm{r}_{\rho\left(i\right)}}\gamma_{\mathrm{r}_i}\right)\left(1+\sigma^2_{1_{i}}\gamma_{\mathrm{s}_{\rho\left(i\right)}}+\sigma^2_{\mathrm{r}_{\rho\left(i\right)}}\gamma_{\mathrm{r}_i}\right)}d \gamma_{\mathrm{r}_i}\\
    d P_\text{r} = \sum_{i}^{}& \frac{\partial P_\text{r}}{\partial \gamma_{\mathrm{r}_i}}d \gamma_{\mathrm{r}_i}=\sum_{i}^{} d \gamma_{\mathrm{r}_i}.
\end{align}
Now we define
\begin{align}
     &\psi_1 \stackrel{.}{=} \min_i \Bigg\{ \frac{\sigma^2_{2_i}}{1+\sigma^2_{2_i}\gamma_{r_i}}\Bigg\}\\
     &\psi_2 \stackrel{.}{=} \min_i \Bigg\{ \frac{\sigma^2_{1_{i}}\gamma_{s_{\rho\left(i\right)}}\sigma^2_{r_{\rho\left(i\right)}}}{\left(1+\sigma^2_{r_{\rho\left(i\right)}}\gamma_{r_i}\right)\left(1+\sigma^2_{1_{i}}\gamma_{s_{\rho\left(i\right)}}+\sigma^2_{r_{\rho\left(i\right)}}\gamma_{r_i}\right)}\Bigg\}.
\end{align}
It is obvious that $\psi_1>0$ and $\psi_2\geq0$. Now we have
\begin{align}
    \frac{d R^{\text{FD}}_\text{sr}}{d P_{{r}}} &= \frac{\sum_{i}^{} \frac{-\sigma^2_{1_{i}}\gamma_{s_{\rho\left(i\right)}}\sigma^2_{r_{\rho\left(i\right)}}}{\left(1+\sigma^2_{r_{\rho\left(i\right)}}\gamma_{r_i}\right)\left(1+\sigma^2_{1_{i}}\gamma_{s_{\rho\left(i\right)}}+\sigma^2_{r_{\rho\left(i\right)}}\gamma_{r_i}\right)}d \gamma_{r_i}}{\sum_{i}^{}d \gamma_{r_i}}\\
    &\leq \frac{\sum_{i}^{} -\psi_2 d \gamma_{r_i}}{\sum_{i}^{}d \gamma_{r_i}} = -\psi_2 \leq 0,
\end{align}
and 
\begin{align}
    \frac{d R^{\text{FD}}_\text{rd}}{d P_{{r}}} &= \frac{\sum_{i}^{} \frac{\sigma^2_{2_i}}{1+\sigma^2_{2_i}\gamma_{r_i}}d \gamma_{r_i}}{\sum_{i}^{}d \gamma_{r_i}}\\
    &\geq \frac{\sum_{i}^{} \psi_1 d \gamma_{r_i}}{\sum_{i}^{}d \gamma_{r_i}} = \psi_1 > 0.
\end{align}
Finally, one can easily conclude
\begin{align}
    \frac{d g}{d P_{\text{r}}} &= \frac{d R^{\text{FD}}_\text{sr}}{d P_{{r}}}-\frac{d R^{\text{FD}}_\text{rd}}{d P_{{r}}}\\
    &\leq -\psi_2 - \psi_1\\
    & <0.
\end{align}
\section*{Appendix IV}
Here we show that if $\gamma_{r_{i}} \geq \gamma_{r_{i+1}}$ and $\sigma_{1_i}^2\gamma_{{s}_{\rho{\left(i\right)}}} \geq \sigma_{1_{i+1}}^2\gamma_{{s}_{\rho\left(i+1\right)}}$ then: $\gamma_{{r}_i}\sigma_{{s}_{\rho{\left(i\right)}}}^2 \geq \gamma_{{r}_{i+1}}\sigma_{{r}_{\rho\left(i+1\right)}}^2$. First we define $f(x,y) = \sqrt{x^2 + axy}-x-b, \ x \geq 0, \ y \geq 0$ in which $a$ and $b$ are positive constants. Now we have,
\begin{align}
    \frac{\partial f}{\partial y} = \frac{ax}{2\sqrt{x^2+axy}} \geq 0.
\end{align}
Also for $\frac{\partial f}{\partial x}$ we have
\begin{align}
    \frac{\partial f}{\partial x} = \frac{2x+ay}{2\sqrt{x^2+axy}} - 1\geq0.
\end{align}
One can check that for positive values $x, \ y$ and $a$, we always have $\frac{2x+ay}{2\sqrt{x^2+axy}} \geq 1$. As a result $f$ is an increasing function of both $x$ and $y$. The rest of the proof is straightforward. We have
\begin{align}
    &\sigma^2_{r_i}\gamma_{r_i} = \left[\frac{\sqrt{\left(\sigma^2_{1_i}\gamma_{s_{\rho\left(i\right)}}\right)^2+\frac{4\sigma^2_{1_i}\gamma_{s_{\rho\left(i\right)}}\gamma_{r_i}}{\lambda}}-\sigma^2_{1_i}\gamma_{s_{\rho\left(i\right)}}-2}{2}\right]^+ \stackrel{(a)}{\geq}\\
    &\left[\frac{\sqrt{\left(\sigma^2_{1_i}\gamma_{s_{\rho\left(i\right)}}\right)^2+\frac{4\sigma^2_{1_i}\gamma_{s_{\rho\left(i\right)}}\gamma_{r_{i+1}}}{\lambda}}-\sigma^2_{1_i}\gamma_{s_{\rho\left(i\right)}}-2}{2}\right]^+ \stackrel{(b)}{\geq}\\
    &\left[\frac{\sqrt{\left(\sigma^2_{1_{i+1}}\gamma_{s_{\rho\left(i+1\right)}}\right)^2\!+\!\frac{4\sigma^2_{1_{i+1}}\gamma_{s_{\rho\left(i+1\right)}}\gamma_{r_{i+1}}}{\lambda}}\!-\!\sigma^2_{1_{i+1}}\gamma_{s_{\rho\left(i+1\right)}}\!-\!2}{2}\right]^+\\
    &=\sigma^2_{r_{i+1}}\gamma_{r_{i+1}},
\end{align}
in which $(a)$ holds because $\gamma_{r_{i}} \geq \gamma_{r_{i+1}}$ and $(b)$ holds because $\sigma_{1_i}^2\gamma_{s_{\rho{\left(i\right)}}} \geq \sigma_{1_{i+1}}^2\gamma_{s_{\rho\left(i+1\right)}}$.
\color{black}
\bibliographystyle{IEEEtran} 
\bibliography{reference}
\end{document}